\pgfplotsset{
  compat=1.14,
}
\newacronym{rmi}{RMI}{Recursive Model Index}
\newacronym{ml}{ML}{Machine Learning}
\newacronym{em}{EM}{External Memory}
\newacronym{dag}{DAG}{Directed Acyclic Graph}
\newacronym{mae}{MAE}{Mean Absolute Error}
\newacronym{csstree}{CSS-tree}{Cache-Sensitive Search tree}
\newacronym{plamodel}{PLA-model}{Piecewise Linear Approximation model}
\newcommand{\pgmindex}{PGM\hyp{}index\xspace}
\newcommand{\bplustree}{\mbox{B\textsuperscript{+}-tree}\xspace}
\newcommand{\fitingtree}{FITing\hyp{}tree\xspace}
\newcommand{\fitingtrees}{FITing\hyp{}trees\xspace}
\newcommand{\btree}{\mbox{B-tree}\xspace}
\newcommand{\btrees}{\mbox{B-trees}\xspace}
\newcommand{\text@hyphens}{\mathcode`\-=`\-\relax}
\newcommand{\id}[1]{\ensuremath{\mathit{\text@hyphens#1}}}
\newcommand{\myparagraph}[1]{\medskip\noindent\textbf{#1.}\hspace{1mm}}
\DeclarePairedDelimiter{\floor}{\lfloor}{\rfloor}
\newtheorem{theorem}{Theorem}
\newtheorem{lemma}{Lemma}
\newtheorem{proposition}{Proposition}
\begin{document}

\title{The PGM-index: a multicriteria, compressed and learned approach to data indexing}

\numberofauthors{2}
\author{
\alignauthor
Paolo Ferragina \\
\affaddr{University of Pisa, Italy} \\
\email{paolo.ferragina@unipi.it} \\
\alignauthor
Giorgio Vinciguerra \\
\affaddr{University of Pisa, Italy} \\
\email{giorgio.vinciguerra@phd.unipi.it}
}

\maketitle

\begin{abstract}
The recent introduction of learned indexes has shaken the foundations of the decades-old field of indexing data structures. Combining, or even replacing, classic design elements such as \btree nodes with machine learning models has proven to give outstanding improvements in the space footprint and time efficiency of data systems.
However, these novel approaches are based on heuristics, thus they lack any guarantees both in their time and space requirements.

We propose the Piecewise Geometric Model index (shortly, \pgmindex), which achieves guaranteed I/O-optimality in query operations, learns an optimal number of linear models, and its peculiar recursive construction makes it a purely learned data structure, rather than a hybrid of traditional and learned indexes (such as RMI and \fitingtree). We show experimentally that the \pgmindex improves the space of the best known learned index, i.e. \fitingtree, by 63.3\% and of the \btree by more than four orders of magnitude, while achieving their same or even better query time efficiency.

We complement this result by proposing three variants of the \pgmindex which address some key issues occurring in the design of modern big data systems. First, we design a {\em compressed} \pgmindex that further reduces its succinct space footprint by exploiting the repetitiveness at the level of the learned linear models it is composed of. Second, we design a \pgmindex that adapts itself to the distribution of the query operations, thus resulting in the first known {\em distribution-aware} learned index to date. Finally, given its flexibility in the offered space-time trade-offs, we propose the {\em multicriteria} \pgmindex whose speciality is to efficiently auto-tune itself in a few seconds over hundreds of millions of keys to the possibly evolving space-time constraints imposed by the application of use. 
\end{abstract}

\section{Introduction}\label{sec:introduction}

The ever-growing amount of information coming from the Web, social networks and Internet of Things severely impairs the management of available data. Advances in CPUs, GPUs and memories hardly solve this problem without properly devised algorithmic solutions. Hence, much research has been devoted to dealing with this enormous amount of data, particularly focusing on memory hierarchy utilisation~\cite{Bender:2005,Vitter:2001}, query processing on streams~\cite{Cormode:2017}, space efficiency~\cite{Navarro:2016,Navarro:2007}, parallel and distributed processing~\cite{Grama:2003}. But despite these formidable results, we still miss proper algorithms and data structures that are flexible enough to work under computational constraints that vary across users, devices and applications, and possibly evolve over time.

In this paper, we restrict our attention to the case of {\em indexing data structures} for internal or external memory which solve the so-called {\em fully indexable dictionary} problem. This problem asks to store a multiset $S$ of real keys in order to efficiently support the query $\id{rank}(x)$, which returns for any possible key $x$ the number of keys in $S$ which are smaller than $x$. In formula,  $\id{rank}(x) = |\{y \in S \mid y < x\}|$. Now, suppose that the keys in $S$ are stored in a sorted array $A$. It is not difficult to deploy the \id{rank} primitive to implement the following classic queries:

\begin{itemize}
    \item $\id{member}(x) = \textsc{true}$ if $x \in S$, \textsc{false} otherwise. Just check whether $A[\id{rank}(x)]=x$, since $A$ stores items from position $0$.
    
    \item $\id{predecessor}(x) = \max\{y \in S \mid y < x\}$. Return $A[i]$, where $i=\id{rank}(x)-1$.

    \item $\id{range}(x,y) = S \cap [x, y]$. Scan from $A[\id{rank}(x)]$ up to keys smaller than or equal to $y$.
\end{itemize}

\noindent Moreover, we notice that it is easy to derive from $\id{member}(x)$ the implementation of the query $\id{lookup}(x)$, which returns the satellite data of $x \in S$ (if any), \textsc{nil} otherwise.

In the following, we will use the generic expression {\em query operations} to refer to any of the previous kinds of pointwise queries, namely: $\id{member}(x)$, $\id{predecessor}(x)$ and $\id{lookup}(x)$. On the contrary, we will be explicit in referring to $\id{range}(x,y)$ because of its variable-size output.

\myparagraph{Background and related work}
Existing indexing data structures can be grouped into: (i)~hash-based, which range from traditional hash tables to recent techniques, like Cuckoo hashing~\cite{Pagh:2004}; (ii)~tree-based, such as \btrees and its variants~\cite{Bender:2005,Vitter:2001,Rao:1999}; (iii)~bitmap-based~\cite{Witten:1999,Chan:1998}, which allow efficient set operations; and (iv)~trie-based, which are commonly used for string keys. Unfortunately, hash-based indexes do not support predecessor or range searches; bitmap-based indexes can be expensive to store, maintain and decompress~\cite{Wang:2017}; trie-based indexes are mostly pointer-based and, apart from recent results~\cite{Ferragina:2016}, keys are stored uncompressed thus taking space proportional to the dictionary size. As a result, \btrees and their variations remain the predominant data structures in commercial database systems for these kinds of queries~\cite{Petrov:2018}.\footnote{For other related work we refer the reader to \cite{Kraska:2018,Galakatos:2019}, here we mention only the results which are closer to our proposal.}

Very recently, this old-fashioned research field has been shaken up by the introduction of {\em learned indexes}~\cite{Kraska:2018}, whose combination, or even replacement, of classic design elements, such as B-tree nodes, with machine-learned models have been shown to achieve outstanding improvements in the space footprint and time efficiency of all the above query operations. The key idea underlying these new data structures is that indexes are {\em models} that we can train to map keys to their location in the array $A$, given by \id{rank}. This parallel between indexing data structures and \id{rank} functions does not seem a new one, in fact any of the previous four families of indexes offers a specific implementation of it. But its novelty becomes clear when we look at the keys $k\in S$ as points $(k, \id{rank}(k))$ in the Cartesian plane. As an example, let us consider the case of a dictionary of keys $a, a+1, \dots, a+n-1$, where $a$ is an integer. Here, \id{rank}$(k)$ can be computed {\em exactly} as $k-a$ (i.e. via a line of slope $1$ and intercept $-a$), and thus it takes constant time and space to be implemented, independently of the number $n$ of keys in $S$. This trivial example sheds light on the potential compression opportunities offered by patterns and trends in the data distribution. However, we cannot argue that all datasets follow exactly a ``linear trend''. 

In general, we have to design \gls{ml} techniques that learn \id{rank} by extracting the patterns in the data through succinct models, ranging from linear to more sophisticated ones, which admit some ``errors'' in the output of the model approximating \id{rank} and that, in turn, can be efficiently corrected to return the exact value of \id{rank}. This way, we can reframe the implementation of \id{rank} as a \gls{ml} problem in which we search for the model that is fast to be computed, is succinct in space, and best approximates \id{rank} according to some criteria that will be detailed below.

This is exactly the design goal pursued by~\cite{Kraska:2018} with their \gls{rmi}, which uses a hierarchy of \gls{ml} models organised as a \gls{dag} and trained to learn the input distribution $(k, \id{rank}(k))$ for all $k\in S$. At query time each model, starting from the top one, takes the query key as input and picks the following model in the \gls{dag} that is ``responsible'' for that key. The output of \gls{rmi} is the position returned by the last queried \gls{ml} model, which is, however, an approximate position. A final binary search is thus executed within a range of neighbouring positions whose size depends on the prediction error of \gls{rmi}.

One could presume that \gls{ml} models cannot provide the guarantees ensured by traditional indexes, both because they can fail to learn the distribution and because they can be expensive to evaluate \cite{Kraska:2019}. Unexpectedly, it was reported that \gls{rmi} dominates the \bplustree, being up to 1.5--3$\times$ faster and two orders of magnitude smaller in space~\cite{Kraska:2018}.

This notwithstanding, the \gls{rmi} introduces another set of space-time trade-offs between model size and query time which are difficult to control because they depend on the distribution of the input data, on its \gls{dag} structure and on the complexity of the \gls{ml} models adopted. This motivated~\cite{Galakatos:2019} to introduce the \fitingtree which uses only linear models, a \bplustree to index them, and it provides an integer parameter $\varepsilon\geq1$ controlling the size of the region in which the final binary search step has to be performed. \cref{fig:segment} shows an example of a linear model $f_s$ approximating 14 keys and its use in determining the approximate position of a key $k=37$, which is indeed $f_s(k)\approx 7$ instead of the correct position $5$, thus making an error $\varepsilon=2$. Experiments showed that the \fitingtree improves the time performance of the \bplustree with a space saving of orders of magnitude~\cite{Galakatos:2019}, but this result was not compared against the performance of \gls{rmi}. Moreover, the computation of the linear models residing in the leaves of the \fitingtree is sub-optimal in theory and inefficient in practice. This impacts negatively on its final space occupancy (as we will quantify in Section \ref{ssec:exp-space-occupancy}) and slows down its query efficiency because of an increase in the height of the \bplustree indexing those linear models.

\begin{figure}[t]
  \centering
  \begin{tikzpicture}
\newcommand{\vertLineFromPoint}[1]{
  \draw[dotted] 
  (axis cs:#1) -- ({axis cs:#1}|-{rel axis cs:0,0})
}
\newcommand{\horLineFromPoint}[1]{
  \draw[dotted] 
  (axis cs:#1) -- ({axis cs:#1}-|{rel axis cs:0,0})
}
\begin{axis}[
  height=5.7cm,
  width=\columnwidth,
  xlabel={key},
  ylabel={position},
  xlabel shift=-5pt,
  ylabel shift=-5pt,
  tick align=center,
  tick pos=left,
  enlargelimits=0.06,
  ymin=0,
  ymax=13,
  x grid style={white!69.019607843137251!black},
  y grid style={white!69.019607843137251!black},
  ytick={0,1,2,3,4,5,8,9,10,11,12,13},
  extra x ticks={37},
  extra x tick labels={$k$},
  extra y ticks={6.623},
  extra y tick labels={$f_s(k)$},
]
\addplot [black, mark=*, mark size=1, mark options={solid}, only marks, forget plot]
table {%
27 0
29 1
32 2
32 3
33 4
37 5
37 6
37 7
38 8
40 9
41 10
43 11
44 12
46 13
};
\addplot [no markers, densely dashed, semithick, gray] table [y={create col/linear regression={y=Y}}] {%
X Y
27 0
29 1
32 2
33 4
37 5
38 8
40 9
41 10
43 11
44 12
46 13
};
\vertLineFromPoint{37,37*\pgfplotstableregressiona+\pgfplotstableregressionb};
\horLineFromPoint{37,37*\pgfplotstableregressiona+\pgfplotstableregressionb};
\horLineFromPoint{37,5};
\draw[decorate,decoration=brace] (axis cs:19.5,37*\pgfplotstableregressiona+\pgfplotstableregressionb-3) -- (axis cs:19.5,37*\pgfplotstableregressiona+\pgfplotstableregressionb+3) node [midway,rotate=90,yshift=1.2em] {\footnotesize$[\id{pos}-\varepsilon, \id{pos}+\varepsilon]$};
\end{axis}
\end{tikzpicture}
  \vspace{-2em}
  \caption{Linear approximation of a multiset of integer keys within the range $[27,46]$. The encoding of the (dashed) segment $f_s$ takes only two floats, and thus its storage is independent of the number of ``encoded'' keys. The key $k=37$ is repeated three times in the multiset $S$, starting from position $5$ in $A$, but $f_s$ errs by $\varepsilon=2$ in predicting the position of its first occurrence.}
  \label{fig:segment}
\end{figure}

\myparagraph{Our contribution} 
In this paper, we contribute to the design of {\em optimal} linear-model learned indexes, to their compression and to the automatic selection of the best learned index that fits the requirements (in space, latency or query distribution) of an underlying application in five main steps.
\begin{enumerate}
  \item We design the first learned index that solves the fully indexable dictionary problem with time and space complexities which are provably better than classic data structures for hierarchical memories, such as \btrees, and modern learned indexes. Our index is I/O-optimal according to the lower bound for predecessor search in external memory proved by \cite{Patrascu:2006}. We call it the \emph{Piecewise Geometric Model index} (\pgmindex) because it turns the indexing of a sequence of keys into the coverage of a sequence of points via segments. Unlike previous work~\cite{Galakatos:2019,Kraska:2018}, the \pgmindex is built upon an optimal number of linear models, and its peculiar recursive construction makes it a purely learned data structure, rather than hybrid of traditional and learned data structures. This aspect allows the \pgmindex to make the most of the constant space-time indexing feature offered by the linear models on which it is built upon (see \cref{ssec:indexing-pla-model,thm:pgm-index,tab:query-costs}).
  
  \item We test the experimental efficiency of the \pgmindex through a large set of experiments over three known datasets (\cref{sec:experiments}). We show that the \pgmindex\ improves the space occupancy of the \fitingtree by 63.3\%, of the \gls{csstree} by a factor 82.7$\times$, and of the \btree by more than four orders of magnitude, while achieving their same or even better query efficiency (see \cref{fig:space-time-comparison}). Unlike the \gls{rmi}, the \pgmindex offers theoretical bounds on the query time and space occupancy, and it guarantees a 4$\times$ increase in the precision of approximating the position of the searched key which, in turn, induces a uniform improvement over all possible space-time trade-offs achieved by \gls{rmi}. 

  \item We then show that the (succinct) space footprint of a \pgmindex can be further reduced by designing novel compression algorithms for the building blocks of the linear models (i.e. slopes and intercepts) on which our index hinges upon. In particular, we provide an efficient algorithm that reduces the number of distinct slopes to be encoded to their optimal minimum number, which is an interesting algorithmic contribution in itself. In practice, in just 80 ms this algorithm improves the space occupancy of a \pgmindex over a dataset of hundreds of million keys by up to 52.2\%. This makes the \pgmindex the first compressed learned index to date (see \cref{sec:compressed-pgm-index}).

  \item We also propose the first example of a distribution-aware learned index, namely one that adapts itself not only to the distribution of the dictionary keys but also to their access frequencies. The resulting distribution-aware \pgmindex achieves the query time of biased data structures~\cite{Bagchi:2005,Bent:1985,Mehlhorn:1984,Seidel:1996}, but with a space occupancy that adapts to the ``regularity trend'' of the input dataset thus benefiting of the succinctness of learned indexes (see \cref{sec:distribution-aware-pgm-index,thm:pgm-index-distribution-aware}).  
  
  \item Given the flexibility in space-time trade-offs offered by the \pgmindex (as shown in \cref{sec:experiments}), we finally study the concept of {\em Multicriteria Data Structures}, which combines in a formal yet effective way multicriteria optimisation with data structures. A multicriteria data structure, for a given problem $P$, is defined by a pair $\langle \mathcal F, \mathcal A \rangle_P$ where $\mathcal F$ is a family of data structures, each one solving $P$ with a proper trade-off in the use of some resources (e.g. time, space, energy), and $\mathcal A$ is an optimisation algorithm that selects in $\mathcal F$ the data structure that ``best fits'' an instance of $P$. We demonstrate the fruitfulness of this concept by introducing the Multicriteria \pgmindex, which hinges upon an optimisation algorithm designed to efficiently explore $\mathcal F$ via a proper space-time cost model for our \pgmindex (\cref{sec:multicriteria-pgm-index}). In our experiments, we show that the Multicriteria \pgmindex is fast, taking less than 20 seconds, to reorganise itself to {\em best} index a dataset of 750M keys within newly given space or time bound. This supports the vision of a new generation of big data processing systems designed upon data structures that can be adjusted on-the-fly to the application, device and user needs, which may possibly change over time, as foreseen in~\cite{Idreos:2019}. In a way, the multicriteria \pgmindex solves their ambitious research challenge within a design space in which $\mathcal F$ consists of variants of \pgmindex, space-time constraints change continuously, and the data structure has to be optimised as fast as possible.\footnote{For completeness, we remark that the concept of multicriteria optimisation has been already applied in Algorithmics to data compression~\cite{Farruggia:2014} and software auto-tuning~\cite{Naono:2010}.}
\end{enumerate}

\section{The \pgmindex}

Given a multiset $S$ of $n$ keys drawn from a universe $\mathcal{U}$,\footnote{The universe $\mathcal{U}$ is a range of reals because of the arithmetic operations required by the linear models. Our solution works for any kind of keys that can be mapped to reals by preserving their order. Examples include integer keys, string keys, etc.} the \pgmindex is a data structure parametric in an integer $\varepsilon \geq 1$ which solves the fully indexable dictionary problem introduced in \cref{sec:introduction}. Let $A$ be a sorted array storing the (possibly repeated) keys of $S$. 

The first ingredient of the \pgmindex is a \glsreset{plamodel}\gls{plamodel}, namely a mapping between keys from $\mathcal U$ and their approximate positions in the array $A$. Specifically, we aim to learn a mapping that returns a position for a key $k \in \mathcal U$ which is at most $\varepsilon$ away from the correct one in $A$. We say {\em piecewise} because one single linear model (i.e. a segment) could be insufficient to $\varepsilon$-approximate the positions of all the keys from $\mathcal U$. As a consequence, the \pgmindex learns a {\em sequence} of segments, each one taking constant space (two floats and one key) and constant query time to return the $\varepsilon$-approximate position of $k$ in $A$. We show below in \cref{lem:optimal-pla-model} that there exists a linear time and space algorithm which computes the optimal \gls{plamodel}, namely one that consists of the minimum number of $\varepsilon$-approximate segments. We also observe that the $\varepsilon$-approximate positions returned by the optimal \gls{plamodel} can be turned into {\em exact} positions via a binary search within a range of $\pm\varepsilon$ keys in $A$, thus taking time logarithmic in the parameter $\varepsilon$, not in the size of $A$. 

The second ingredient of the \pgmindex is a recursive algorithm which adapts the index structure to the distribution of the input keys, thus resulting as much independent as possible from their number (see \cref{fig:pgm-index} for pictorial example). More precisely, in order to make the most of the ability of a single segment to index in constant space and time an arbitrarily long range of keys, we turn the optimal \gls{plamodel} built over the array $A$ into a set of keys, and we proceed recursively by building another optimal \gls{plamodel} over these keys. This process continues until one single segment is obtained, which will form the root of our data structure.
\begin{figure*}[ht] 
  \centering
  \resizebox{0.72\textwidth}{!}{\input{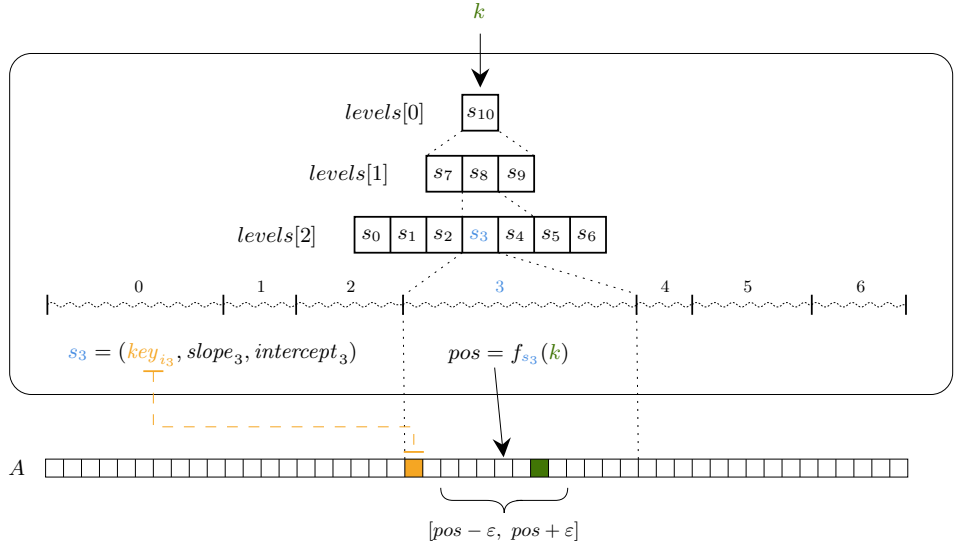}}
  \caption{Each segment in a \pgmindex is ``responsible'' for routing the queried key to one of the segments of the level below. In the picture, the dotted lines show that the root segment $s_{10}$ routes the queried key to one of the segments among $\{s_7, s_8, s_9\}$ (which together cover the same range of keys as $s_{10}$), whereas $s_8$ routes the queried key to either $s_3$ or $s_4$ (which together cover the same range of keys as $s_8$). Segments at the last level (i.e. $\id{levels}[2]$) are $\varepsilon$-approximate segments for the sub-range of keys in $A$ depicted by wavy lines of which they are responsible for.}
  \label{fig:pgm-index}
\end{figure*}
Overall, each \gls{plamodel} forms a level of the \pgmindex, and each segment of that \gls{plamodel} forms a node of the data structure at that level.
The speciality of this recursive construction with respect to known learned index proposals (cf. \fitingtree or \gls{rmi}) is that the \pgmindex is a {\em pure learned index} which does not hinge on classic data structures either in its structure (as in the \fitingtree) or as a fallback when the \gls{ml} models err too much (as in \gls{rmi}). 

The net result are three main advantages in its space-time complexity. First, the \pgmindex is built upon the minimum number of segments, while other learned indexes, such as \fitingtree and \gls{rmi}, compute a sub-optimal number of segments with a subsequent penalisation in their time and space efficiency.
Second, the \pgmindex uses these segments as constant-space routing tables at all levels of the data structure, while other indexes (e.g. \fitingtree, \btree and variants) use space-consuming nodes storing a large number of keys which depends on the disk-page size only, thus resulting blind to the possible regularity present in the data distribution. Third, these routing tables of the \pgmindex take constant time to restrict the search of a key in a node to a smaller subset of the indexed keys (of size $\varepsilon$), whereas nodes in the \bplustree and the \fitingtree incur a search cost that grows with the node size, thus slowing the tree traversal during the query operations.

\smallskip The following two subsections will detail the two main ingredients of the \pgmindex described above.

\subsection{The optimal PLA-model}\label{ssec:piecewise-linear-model}

Let us be given a sorted array $A=[k_0, k_1, \dots, k_{n-1}]$ of $n$ real and possibly repeated keys drawn from a universe $\mathcal{U}$. In this section, we describe how an $\varepsilon$-approximate implementation of the mapping \id{rank} from keys to positions in $A$ can be efficiently computed and succinctly stored via an optimal number of segments, which is one of the core design elements of a \pgmindex. In the next section, we will comment on the recursive construction of the whole \pgmindex and the implementation of the query operations.

A segment $s$ is a triple $(k, \id{slope}, \id{intercept})$ that indexes a range of $\mathcal{U}$ through the function $f_{s}(k) = k\, \times\, \id{slope} + \id{intercept}$, as depicted in \cref{fig:segment}. An important characteristic of the \pgmindex is the ``precision'' $\varepsilon$ of its segments.

\begin{definition}
  Let $A$ be a sorted array of $n$ keys drawn from a universe $\mathcal U$ and let $\varepsilon \geq 1$ be an integer. A segment $s=(k,\id{slope},\id{intercept})$ is said to provide an \emph{$\varepsilon$-approximate} indexing of the range of all keys in $[k_i,k_{i+r}]$, for some $k_i,k_{i+r} \in A$, if $|f_{s}(x)-\id{rank}(x)| \leq \varepsilon$ for all $x\in \mathcal{U}$ such that $k_i \leq x \leq k_{i+r}$.
\end{definition}

An $\varepsilon$-approximate segment can be seen as an approximate predecessor search data structure for its covered range of keys offering constant query time and constant occupied space. One single segment, however, could be insufficient to $\varepsilon$-approximate the $\id{rank}$ function over the whole $\mathcal U$; hence, we look at the computation of a sequence of segments, also termed \glsreset{plamodel}\gls{plamodel}. 

\begin{definition}
Given $\varepsilon \geq 1$, the \emph{piecewise linear $\varepsilon$\hyp{}approximation problem} consists of computing the \gls{plamodel} which minimises the number of its segments $\{s_0, \ldots, s_{m-1}\}$, provided that each segment $s_j$ is $\varepsilon$-approximate for its covered range of keys, these ranges are disjoint and together cover the entire universe $\mathcal U$.
\end{definition}

A way to find the optimal \gls{plamodel} for an array $A$ is by dynamic programming, but the $O(n^3)$ time it requires is prohibitive. The authors of the \fitingtree \cite{Galakatos:2019} attacked this problem via a heuristic approach, called shrinking cone, which is linear in time but does not guarantee to find the optimal \gls{plamodel}, and indeed it performs poorly in practice (as we will show in \cref{ssec:exp-space-occupancy}).

Interestingly enough, we found that this problem has been extensively studied for lossy compression and similarity search of time series (see e.g. \cite{ORourke:1981,Buragohain:2007,Chen:2013,Chen:2007,Xie:2014} and refs therein), and it admits streaming algorithms which take $O(n)$ optimal time. The key idea of this family of approaches is to reduce the piecewise linear $\varepsilon$-approximation problem to the one of constructing convex hulls of a set of points, which in our case is the set $\{(k_i,\id{rank}(k_i))\}$ grown incrementally for $i=0,\dots,n-1$. As long as the convex hull can be enclosed in a (possibly rotated) rectangle of height no more than $2\varepsilon$, the index $i$ is incremented and the set is extended. As soon as the rectangle enclosing the convex hull is higher than $2\varepsilon$, we compute one segment of the \gls{plamodel} by taking the line which splits that rectangle into two equal-sized halves. Then, the current set of processed elements is emptied and the algorithm restarts from the rest of the input points. This greedy approach can be proved to be optimal and to have linear time and space complexity. We can rephrase this result in our context as follows.

\begin{lemma}[Optimal \gls{plamodel} \cite{ORourke:1981}]
  \label{lem:optimal-pla-model}
  Given a sequence $\{(x_i,y_i)\}_{i=0,\dots,n-1}$ of points that are nondecreasing in their $x$-coordinate. There exists a streaming algorithm that in linear time and space computes the minimum number of segments that $\varepsilon$-approximate the $y$-coordinate of each point in that set.
\end{lemma}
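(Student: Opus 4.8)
The plan is to prove the statement in three movements: a geometric reformulation of when a single segment suffices, a greedy algorithm whose optimality rests on a monotonicity property, and an amortized analysis delivering the linear bounds. First I would translate the requirement that one line $\varepsilon$-approximate a contiguous block of points $(x_i,y_i),\dots,(x_j,y_j)$ into a feasibility condition. Asking for a pair $(a,b)$ with $|a x_\ell + b - y_\ell| \le \varepsilon$ for every $\ell$ is the same as asking that the line $y=ax+b$ stab every vertical segment $[y_\ell-\varepsilon,\,y_\ell+\varepsilon]$ placed at abscissa $x_\ell$. For a fixed slope $a$ a feasible intercept exists iff $\max_\ell (y_\ell - a x_\ell) - \min_\ell (y_\ell - a x_\ell) \le 2\varepsilon$, and this quantity is a convex function of $a$; hence a stabbing line exists iff the convex hull $H$ of the points can be enclosed between two parallel lines of some common slope whose vertical separation is at most $2\varepsilon$, the midline of the thinnest such slab being a valid approximating line. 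This is exactly the ``(possibly rotated) rectangle of height at most $2\varepsilon$'' of the informal description, clipped left and right by the verticals $x=x_i$ and $x=x_j$.

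The key structural fact I would isolate next is monotonicity: if a contiguous block is $\varepsilon$-approximable by one segment, then so is every sub-block, since dropping points only removes stabbing constraints (equivalently, shrinks $H$ and its minimal enclosing slab). This makes the natural greedy well defined: process the points in $x$-order, keep extending the current segment until the minimal enclosing slab of the accumulated hull first exceeds height $2\varepsilon$, then emit a segment and restart from the offending point. To show greedy uses the minimum number of segments I would run a ``greedy stays ahead'' induction against any optimal partition into contiguous $\varepsilon$-approximable blocks: the first greedy block covers a maximal prefix, so it ends no earlier than the first optimal block, and by monotonicity this advantage propagates, giving that after $t$ segments greedy has covered at least as many points as any optimal solution. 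Consequently greedy never needs more segments.

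Finally I would argue the streaming bounds. Because the abscissae $x_i$ are nondecreasing, the convex hull of the current block can be maintained by an online monotone-chain construction in which each point is inserted once and deleted at most once from the upper and lower chains, giving $O(n)$ hull updates in total. The thinnest enclosing slab is attained at an antipodal edge--vertex pair, which can be tracked incrementally alongside the hull; dually, one may instead maintain the convex region of admissible pairs $(a,b)$ as an intersection of half-planes and simply test nonemptiness after each insertion. Emitting a segment when the slab breaks costs time proportional to the size of the current hull, which is charged to the points of that segment before the hull is reset; summed over all segments this is $O(n)$, and the working storage never exceeds one hull, hence $O(n)$ space overall. The main obstacle is precisely this amortized bookkeeping: one must show that maintaining the thinnest slab (or, dually, the feasible region) under insertion, and detecting the exact moment its height crosses $2\varepsilon$, costs only amortized constant time per point, so that the repeated cut-and-restart does not inflate the total beyond linear. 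Everything else reduces to the convexity argument of the first step or to the routine greedy exchange of the second.
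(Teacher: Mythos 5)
Your proposal is correct and follows essentially the same route as the paper, which does not prove this lemma itself but cites \cite{ORourke:1981} and sketches exactly the argument you give: reduce single-segment feasibility to enclosing the convex hull of the current block in a slab (``rotated rectangle'') of vertical height at most $2\varepsilon$, extend greedily until this first fails, and restart. Your additional details---the convexity of the width as a function of the slope, the heredity/monotonicity property underpinning the greedy-stays-ahead optimality argument, and the amortized accounting for hull (or dual feasible-region) maintenance---are faithful elaborations of the cited streaming algorithm rather than a different approach.
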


For our application to the dictionary problem, the $x_i$s of \cref{lem:optimal-pla-model} correspond to the input keys $k_i$, and the $y_i$s correspond to their positions $0, \dots, n-1$ in the sorted input array $A$. 
Therefore, \cref{lem:optimal-pla-model} provides an algorithm which computes in linear time and space the optimal \gls{plamodel} for the keys stored in $A$.

The next step is to prove a simple but very useful bound on the number of keys covered by a segment of the optimal \gls{plamodel}, which we will deploy in the analysis of the \pgmindex.

\begin{lemma}
  \label[lemma]{lem:lower-bound-2eps}
  Given an ordered sequence of keys $k_i \in \mathcal{U}$ and the corresponding sequence $\{(k_i,i)\}_{i=0,\dots,n-1}$ of points in the Cartesian plane that are nondecreasing in both their coordinates. The  algorithm of \cref{lem:optimal-pla-model} determines a (minimum) number $m_\textit{opt}$ of segments which cover at least $2\varepsilon$ points each, so that $m_\textit{opt} \leq n/(2\varepsilon)$.
\end{lemma}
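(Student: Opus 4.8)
The plan is to derive the whole statement from a single combinatorial fact: every block of $2\varepsilon$ consecutive points of the sequence $\{(k_i,i)\}$ can be $\varepsilon$-approximated by one segment. Granting this, the streaming (greedy) algorithm of \cref{lem:optimal-pla-model}, which extends the current segment as far as its enclosing rectangle allows before opening a new one, can never close a segment while fewer than $2\varepsilon$ points have been consumed and input still remains; hence each of its segments covers at least $2\varepsilon$ points. Since that algorithm attains the optimum $m_\textit{opt}$ and its segments cover pairwise disjoint ranges whose union is the whole input of $n$ points, summing the per-segment counts gives $2\varepsilon\, m_\textit{opt} \le n$, i.e. $m_\textit{opt} \le n/(2\varepsilon)$.

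The heart of the argument is therefore the coverability claim, which I would prove by exhibiting an explicit segment. Fix $2\varepsilon$ consecutive points with indices $a, a+1, \dots, a+2\varepsilon-1$: their $y$-coordinates are exactly these $2\varepsilon$ consecutive integers, so they lie in an interval of length $2\varepsilon-1$ centred at $c = a + \varepsilon - \tfrac12$. Take the horizontal segment $f_s(x)=c$ (that is, slope $0$ and intercept $c$, which is a legal segment in our framework): at each of these points its error is at most $(2\varepsilon-1)/2 = \varepsilon - \tfrac12 \le \varepsilon$. To meet the definition of $\varepsilon$-approximation over the whole continuous range $[k_a, k_{a+2\varepsilon-1}]$, I would use that $\id{rank}$ is nondecreasing and integer-valued, so over this range it only attains values already taken at the keys $k_a,\dots,k_{a+2\varepsilon-1}$, each of which lies within $\varepsilon$ of $c$; repeated keys only shrink the set of attained values and so cannot break the bound.

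I expect the delicate step to be precisely this passage from the finite sample points $(k_i,i)$ to the continuous requirement ``for all $x \in \mathcal U$ with $k_a \le x \le k_{a+2\varepsilon-1}$'', together with the bookkeeping forced by repeated keys, where a value shared across a block boundary could in principle pull a $\id{rank}$ value outside the naive interval $[a, a+2\varepsilon-1]$. The clean way around this is to align block boundaries with distinct-value boundaries and to count \emph{occurrences} (points) rather than distinct values, so that each block still contributes at least $2\varepsilon$ points. A minor loose end is the last segment, which may cover fewer than $2\varepsilon$ points when $n$ is not a multiple of $2\varepsilon$; this only adds a rounding term, giving $m_\textit{opt} \le \lceil n/(2\varepsilon)\rceil$ and leaving the stated bound intact in the asymptotic regime where it is used.
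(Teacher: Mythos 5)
Your proof takes essentially the same route as the paper's: for each block of $2\varepsilon$ consecutive points you exhibit a horizontal $\varepsilon$-approximating segment (the paper uses the line $y=i+\varepsilon$, you use the midpoint $a+\varepsilon-\tfrac12$) and then invoke the greedy, streaming nature of the algorithm of \cref{lem:optimal-pla-model} to conclude that no segment can close before $2\varepsilon$ points have been consumed. Your extra caveats about repeated keys straddling block boundaries and about the possibly short final segment are points the paper's one-line conclusion glosses over, but they refine rather than alter the argument.
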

\begin{proof}
  For any chunk of $2\varepsilon$ consecutive keys $k_i, k_{i+1},\allowbreak\dots,\allowbreak k_{i+2\varepsilon-1}$, let us take the horizontal segment $y = i+\varepsilon$. It is easy to see that those keys generate the points $(k_i, i), (k_{i+1}, i+1), \dots , (k_{i+2\varepsilon-1}, i+2\varepsilon-1)$ and each of these keys have $y$-distance at most $\varepsilon$ from that line, which is then an $\varepsilon$\hyp{}approximate segment for that range of $2\varepsilon$-keys. Hence, any segment of the optimal \gls{plamodel} covers at least $2\varepsilon$ keys.
\end{proof}

\subsection{Indexing the PLA-model}\label{ssec:indexing-pla-model}

The algorithm of \cref{lem:optimal-pla-model} returns an optimal \gls{plamodel} for the input array $A$ as a sequence $M=[s_0,\dots,s_{m-1}]$ of $m$ segments.\footnote{To simplify the notation, we write $m$ instead of $m_\textit{opt}$.} Now, in order to solve the fully indexable dictionary problem, we need a way to find the $\varepsilon$-approximate segment $s_j=(k_{i_j},\id{slope}_j,\id{intercept}_j)$ responsible for estimating the approximate position \id{pos} of a query key $k$, i.e. this is the rightmost segment $s_j$ such that $k_{i_j} \leq k$.
When $m$ is large, we could perform a binary search on the sequence $M$, or we could index it via a proper data structure, such as a multiway search tree (as done in the \fitingtree). In this case, the membership query could then be answered in three steps. First, the multiway search tree is queried to find the rightmost segment $s_j$ such that $k_{i_j} \leq k$. Second, that segment $s_j$ is used to estimate the position $\id{pos} = f_{s_j}(k)$ for the query key $k$. Third, the exact position of $k$ is determined via a binary search within $A[\id{pos}-\varepsilon, \id{pos}+\varepsilon]$. The net consequence is that a query over this data structure would take $O(\log_B m + \log\varepsilon)$ time, where $B$ is the  fan-out of the multiway tree and $\varepsilon$ is the error incurred by $s_j$ when approximating $\id{rank}(k)$.

However, the indexing strategy above does not take full advantage of the key distribution because it resorts to a classic data structure with fixed fan-out to index $M$. Therefore, we introduce a novel strategy which consists of repeating the piecewise linear approximation process recursively on a set of keys derived from the sequence of segments. More precisely, we start with the sequence  $M$ constructed over the whole input array $A$, then we extract the first key of $A$ covered by each segment and finally construct another optimal \gls{plamodel} over this reduced set of keys. We proceed in this recursive way until the \gls{plamodel} consists of one segment. 

If we map segments to nodes, then this approach constructs a sort of multiway search tree but with three main advantages with respect to \btrees (and thus \fitingtrees): (i)~its nodes have variable fan-out driven by the (typically large) number of keys covered by the segments associated with those nodes; (ii)~the segment in a node plays the role of a constant-space and constant-time $\varepsilon$-approximate routing table for the various queries to be supported; (iii)~the search in each node corrects the $\varepsilon$-approximate position returned by that routing table via a binary search (see next), and thus it has a time cost that depends logarithmically on $\varepsilon$, and hence it is independent of the number of keys covered by the corresponding segment.

Now, a query operation over this \emph{Recursive \pgmindex} works as follows. At every level, it uses the segment referring to the visited node to estimate the position of the searched key $k$ among the keys of the lower level.\footnote{To correctly approximate the position of a key $k$ falling between the last key covered by a segment $s_j$ and the first key covered by $s_{j+1}$, we compute $\min\{f_{s_j}(k), f_{s_{j+1}}(k_{i_{j+1}})\}$.} The real position is then found by a binary search in a range of size $2\varepsilon$ centred around the estimated position. Given that every key on the next level is the first key covered by a node on that level, we have identified the next node to visit, and thus the next segment to query, and the process continues until the last level is reached. An example of a query operation is depicted in \cref{fig:pgm-index}.

\begin{theorem}\label{thm:pgm-index}
  Let $A$ be an ordered array of $n$ keys from a universe $\mathcal{U}$, and $\varepsilon\geq 1$ be a fixed integer parameter. The Recursive \pgmindex with parameter $\varepsilon$ indexes the array $A$ taking $\Theta(m)$ space and answers rank, membership and predecessor queries in $O(\log m)$ time and $O((\log_c m) \log(\varepsilon/B))$ I/Os, where $m$ is the minimum number of $\varepsilon$-approximate segments covering $A$, $c \geq 2\varepsilon$ denotes the variable fan-out of the data structure, and $B$ is the block size of the External Memory model. Range queries are answered in extra (optimal) $O(\id{K})$ time and $O(K/B)$ I/Os, where $K$ is the number of keys satisfying the range query.
\end{theorem}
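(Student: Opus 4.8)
The plan is to analyze the recursive structure one level at a time, control the geometric decay of the segment counts via \cref{lem:lower-bound-2eps}, and then assemble the per-level costs. Write $m_0 = m$ for the number of segments of the bottom \gls{plamodel} built over $A$, and more generally $m_\ell$ for the number of segments of the \gls{plamodel} at level $\ell$, which is built over the $m_{\ell-1}$ keys obtained by extracting the first key covered by each segment of level $\ell-1$. The recursion halts at the level $h$ where a single segment remains, i.e. $m_h = 1$. Every level stores constant-size triples, so all bounds reduce to counting levels and segments and pricing a single binary search.

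First I would establish the height and space bounds. Since the \gls{plamodel} of each level is produced by the algorithm of \cref{lem:optimal-pla-model} over $m_{\ell-1}$ points, \cref{lem:lower-bound-2eps} yields $m_\ell \le m_{\ell-1}/(2\varepsilon)$, hence $m_\ell \le m/(2\varepsilon)^\ell$. Setting $m_h = 1$ gives $h = O(\log_{2\varepsilon} m)$; writing $c \ge 2\varepsilon$ for the actual (variable) number of keys covered by a node, this is $O(\log_c m)$. Summing the segment counts over all levels gives a geometric series $\sum_{\ell \ge 0} m/(2\varepsilon)^\ell \le m\cdot \frac{2\varepsilon}{2\varepsilon-1}\le 2m$, so the whole structure stores $\Theta(m)$ segments of constant size each; the matching lower bound is the $m$ segments mandated at the bottom level, which settles the $\Theta(m)$ space bound.

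Next I would price the query. A query visits one node per level top-down; at each visited node it evaluates the segment in $O(1)$ to obtain an $\varepsilon$-approximate position among the keys of the level below, then corrects it with a binary search over a window of $2\varepsilon$ positions. In internal memory each such binary search costs $O(\log\varepsilon)$, so the total is $O(h\log\varepsilon)$; the key observation is the telescoping $\log_{2\varepsilon} m \cdot \log(2\varepsilon) = \log m$, whence the query time is $O(\log m)$. In the External Memory model a binary search confined to a contiguous window of $2\varepsilon$ elements costs $O(1 + \log(\varepsilon/B))$ I/Os, since its final $\Theta(\log B)$ probes fall inside one already-loaded block; summed over the $O(\log_c m)$ levels this gives the claimed $O((\log_c m)\log(\varepsilon/B))$ I/Os. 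For range queries I would first run the above search for $x$ and then scan $A$ from position $\id{rank}(x)$ until a key exceeding $y$ is met, adding $O(K)$ time and $O(K/B)$ I/Os, the latter optimal since reporting $K$ contiguous keys must touch $\Omega(K/B)$ blocks.

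The step I expect to demand the most care is not the counting but the correctness of the top-down routing: I must argue that the binary-search window of width $2\varepsilon$ opened at each level is always wide enough to contain the correct node of the level below. This follows because each level's \gls{plamodel} is $\varepsilon$-approximate for its own input key set, so the position predicted for a routing key deviates by at most $\varepsilon$ from its true rank within that level; the delicate case is a query key falling in the gap between the range of one segment and the range of the next, which is exactly why the estimate is taken as the minimum $\min\{f_{s_j}(k),\, f_{s_{j+1}}(k_{i_{j+1}})\}$ of the two adjacent segments' predictions. Once this invariant is verified at a single generic level, it propagates unchanged down the recursion, and the cost accounting above completes the proof.
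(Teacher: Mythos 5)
Your proposal is correct and follows essentially the same route as the paper's proof: bounding the per-level segment counts geometrically via \cref{lem:lower-bound-2eps}, summing the geometric series for the $\Theta(m)$ space, charging each of the $O(\log_c m)$ levels one binary search over a window of $2\varepsilon$ keys for the time and I/O bounds, and scanning $A$ for range queries. You simply spell out details the paper leaves implicit (the telescoping $\log_{2\varepsilon}m\cdot\log(2\varepsilon)=\log m$ and the routing-correctness invariant handled in the paper's footnote), which is fine.
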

\begin{proof}
Each step of the recursion reduces the number of segments by a variable factor $c$ which is nonetheless at least $2\varepsilon$ because of \cref{lem:lower-bound-2eps}. The number of levels is, therefore, $L=O(\log_c m)$, and the total space required by the index is $\sum_{\ell=0}^L m/(2\varepsilon)^\ell=\Theta(m)$. For the rank, membership and predecessor queries, the bounds on the running time and the I/O complexity follow easily by observing that a query performs $L$ binary searches over intervals having size at most $2\varepsilon$. In the case of range queries, we report the $K$ keys by scanning $A$ from the position returned by the rank query.
\end{proof}

\begin{table*}[ht]
  \newcolumntype{C}[1]{>{\centering\let\newline\\\arraybackslash\hspace{0pt}}m{#1}}%
  \centering
  \begin{tabular}{m{4cm} C{2cm} C{3cm} C{3cm} C{3cm}}
  \toprule
      Data structure
    & Space
    & RAM model \newline \footnotesize{worst case time}
    & EM model \newline \footnotesize{worst case I/Os}
    & EM model \newline \footnotesize{best case I/Os} \\
  \midrule
  Plain sorted array                 & $O(1)$                  & $O(\log n)$                 & $O(\log \frac{n}{B})$ & $O(\log \frac{n}{B})$ \\
  Multiway tree (e.g. \btree)       & $\Theta(n)$             & $O(\log n)$                 & $O(\log_B n)$ & $O(\log_B n)$ \\
  \fitingtree  & $\Theta(m_\textit{greedy})$ & $O(\log m_\textit{greedy}+\log\varepsilon)$     & $O(\log_B m_\textit{greedy})$         & $O(\log_B m_\textit{greedy})$ \\
  Recursive \pgmindex                & $\Theta(m_\textit{opt})$    & $O(\log m_\textit{opt}+\log\varepsilon)$               & $O(\log_c m_\textit{opt})$ \newline \scriptsize{$c\geq2\varepsilon=\Omega(B)$} & $O(1)$ \\
  \bottomrule
  \end{tabular}
  \caption{The Recursive \pgmindex improves the time, I/O and space complexity of the query operations of traditional external memory indexes (e.g. \btree) and learned indexes (i.e. \fitingtree). The integer $\varepsilon\geq 1$ denotes the error we guarantee in approximating the positions of the input keys. We denote with $m_\textit{opt}$ the minimum number of $\varepsilon$-approximate segments, computed by \cref{lem:optimal-pla-model}, and with $m_\textit{greedy}$ the number of $\varepsilon$-approximate segments computed by the greedy algorithm at the core of the \fitingtree. Of course, $m_\textit{opt} \leq m_\textit{greedy}$. The learned index \gls{rmi} is not included in the table because it lacks of guaranteed bounds.}
\label{tab:query-costs}
\end{table*}

The main novelty of the \pgmindex is that its space overhead does not grow linearly with $n$, as in the traditional indexes mentioned in \cref{sec:introduction}, but it depends on the ``regularity trend'' of the input array $A$ which also decreases with the value of $\varepsilon$. Because of \cref{lem:lower-bound-2eps}, the number of segments at the last level of a \pgmindex cannot be more than $n/(2\varepsilon)$, so that $m < n$ since $\varepsilon \geq 1$. Since this fact holds for the recursive levels too, the \pgmindex cannot be asymptotically worse in space and time than a $2\varepsilon$-way tree, such as a \fitingtree, \bplustree or \gls{csstree} (just take $c=2\varepsilon=\Theta(B)$ in \cref{thm:pgm-index}). According to the lower bound proved by \cite{Patrascu:2006}, we can state that the \pgmindex solves I/O-optimally the fully indexable dictionary problem with predecessor search, meaning that it can potentially replace any existing index with virtually no performance degradation.

\cref{tab:query-costs} summarises these bounds for the \pgmindex and its competitors both in the RAM model and in the \gls{em} model for the rank query and its derivatives: i.e. predecessor, membership and lookup.  

The thorough experimental results of \cref{sec:experiments} will support further these theoretical achievements by showing  that the \pgmindex is much faster and succinct than \fitingtree, \bplustree and \gls{csstree} because, in practice, it will be $m_{opt} \ll n$ and $c \gg 2\varepsilon$.

\section{The Compressed \pgmindex}\label{sec:compressed-pgm-index}

Compressing the \pgmindex boils down to providing proper lossless compressors for the keys and the segments (i.e. intercepts and slopes) which constitute the building blocks of our learned data structure.
In this section, we propose techniques specifically tailored to the compression of the segments, since the compression of the input keys is an orthogonal problem for which there exist a plethora of solutions (see e.g. \cite{Moffat:2002book,Navarro:2016} for integer keys, and \cite{Burtscher:2009,Lindstrom:2018} for floating-point keys).

For what concerns the compression of the intercepts, they can be made increasing by using the coordinate system of the segments, i.e. the one that computes the position of an element $k$ as $f_{s_j}(k)=(k-k_{i_j}) \times \id{slope}_j+\id{intercept}_j$. Then, since the result of $f_{s_j}(k)$ has to be truncated to return an integer position in $A$, we store the intercepts as integers $\floor{\id{intercept}_j}$.\footnote{Note that this transformation increases $\varepsilon$ by 1.} Finally, we exploit the fact that intercepts are increasing integers smaller than $n$ and thus use the succinct data structure of \cite{Okanohara:2007} to obtain the following result.

\begin{proposition}
  Let $m$ be the number of segments of a \pgmindex indexing $n$ keys drawn from a universe $\mathcal{U}$. The intercepts of those segments can be stored using $m\log(n/m)+1.92m+o(m)$ bits and be randomly accessed in $O(1)$ time. 
\end{proposition}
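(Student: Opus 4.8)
The plan is to reduce the statement to a black-box application of the succinct dictionary of \cite{Okanohara:2007}: I will argue that the truncated intercepts form a nondecreasing sequence of $m$ integers over a universe of size $n$, and then the stated space and $O(1)$-access bounds are exactly those guaranteed by that data structure. The only substantive work is verifying the monotonicity hypothesis.

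First I would make the monotonicity explicit. In the local coordinate system the intercept of the $j$-th segment is the approximate position it assigns to the first key $k_{i_j}$ it covers, i.e. $\id{intercept}_j = f_{s_j}(k_{i_j})$. Let $p_j = \id{rank}(k_{i_j})$ be the true position of that key; the $\varepsilon$-approximation guarantee gives $|\id{intercept}_j - p_j| \leq \varepsilon$. Because consecutive segments cover disjoint, adjacent ranges of keys, the $p_j$ are increasing, and by \cref{lem:lower-bound-2eps} every segment spans at least $2\varepsilon$ keys, hence $p_{j+1} \geq p_j + 2\varepsilon$. Chaining these facts,
\[
  \id{intercept}_{j+1} \;\geq\; p_{j+1} - \varepsilon \;\geq\; p_j + 2\varepsilon - \varepsilon \;\geq\; \id{intercept}_j ,
\]
so the intercepts are nondecreasing. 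Since the floor function is monotone, the truncated values $\floor{\id{intercept}_j}$ remain nondecreasing integers, and (as already observed in the text preceding the statement) they are all smaller than $n$; thus they constitute a monotone sequence of length $m$ over a universe of size $n$.

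With this in place, the remainder is a direct invocation of \cite{Okanohara:2007}: their Elias--Fano-based representation stores a monotone sequence of $m$ values from $[0,n)$ in $m\log(n/m)+1.92m+o(m)$ bits while supporting the select primitive in constant time. Randomly accessing $\floor{\id{intercept}_j}$ is precisely the select of the $j$-th element, answered in $O(1)$ time, which yields both claims of the proposition.

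The step I expect to be the crux is the monotonicity argument, and it is instructive that it rests on \cref{lem:lower-bound-2eps}: in the original coordinate system the intercepts need not be ordered at all, and even after the change of coordinates monotonicity would fail were some segment to span fewer than $2\varepsilon$ keys, since then the two-sided $\pm\varepsilon$ approximation slack of adjacent segments could invert their order. The guarantee that each segment covers at least $2\varepsilon$ keys is exactly what dominates this slack and forces the nondecreasing order; once it holds, the space and time bounds are inherited verbatim from the cited structure with no further computation.
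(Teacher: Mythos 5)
Your proposal is correct and follows the same route as the paper, which states this result without a formal proof: switch to the per-segment coordinate system, truncate the intercepts to integers bounded by $n$, and invoke the succinct rank/select structure of \cite{Okanohara:2007} as a black box. Your explicit verification that the intercepts are nondecreasing --- chaining the $\pm\varepsilon$ approximation slack against the $2\varepsilon$-coverage guarantee of \cref{lem:lower-bound-2eps} --- is exactly the step the paper leaves implicit in the phrase ``they can be made increasing by using the coordinate system of the segments,'' and it is carried out correctly.
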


The compression of slopes is more involved, and we need to design a specific novel compression technique. The starting observation is that the algorithm of \cref{lem:optimal-pla-model} computes not just a single segment but a whole family of $\varepsilon$-approximate segments whose slopes identify to an interval of reals.
Specifically, let us suppose that these slope intervals are $I_0=(a_0,b_0), \dots, I_{m-1}=(a_{m-1},b_{m-1})$, where each original slope $\id{slope}_j$ belongs to $I_j$ for $j=0, \, \ldots, m-1$. Our goal is to reduce the entropy of the set of slopes by reducing their distinct number from $m$ to $t$. This will allow us to change the encoding of $m$ floats into the encoding $t$ floats plus $m$ short integers, with the hope that $t \ll m$ as we will indeed show experimentally in \cref{ssec:exp-space-occupancy}.

We achieve this goal by designing an algorithm that takes $O(m \log m)$ time to determine the {\em minimum number $t$ of distinct slopes} such that each interval $I_j$ includes one of them. Given this result, we create a table $T$ which stores the distinct slopes (as $t$ floating-point numbers of $64$ bits each) and then change every initial $\id{slope}_j \in I_j$ into one of the new $t$ distinct slopes, say $\id{slope}'_j$, which is still guaranteed to belong to $I_j$ and can be encoded in $\lceil \log t \rceil$ bits.

Let us now describe the algorithm to determine the minimum number $t$ of distinct slopes for the initial slope intervals which are assumed to be $(a_0,b_0), \dots, (a_{m-1},b_{m-1})$. First, we sort lexicographically the slope intervals to obtain an array $I$ in which overlapping intervals are consecutive. We assume that every pair keeps as satellite information the index of the corresponding interval, namely $j$ for $(a_j,b_j)$. Then, we scan $I$ to determine the maximal prefix of intervals in $I$ that intersect each other. As an example, say the sorted slope intervals are $\{(2,7), (3,6),\allowbreak (4,8), (7,9),\dots\}$, then the first maximal sequence of intersecting intervals is $\{(2,7), (3,6), (4,8) \}$ because they intersect each other but the fourth interval $(7,9)$ does not intersect the second interval $(3,6)$ and thus it is not included. 

Let $(l,r)$ be the intersection of all the intervals in the current maximal prefix of $I$: this is $(4,6)$ in the running example. Then, any slope in $(l,r)$ is an $\varepsilon$-approximate slope for everyone of the intervals in that prefix of $I$. Therefore, we choose one real in $(l,r)$ and assign it as the slope of each of those segments in that maximal prefix. The process then continues by determining the maximal prefix of the remaining intervals, until the overall sequence $I$ is processed (details and optimally proof in the full version of the paper).

\begin{lemma}
  \label{lem:slope-compression-algorithm}
  Let $m$ be the number of $\varepsilon$-approximate segments of a \pgmindex indexing $n$ keys drawn from a universe $\mathcal{U}$. There exists a lossless compressor for the segments which computes the minimum number of distinct slopes $t \leq m$ while preserving the $\varepsilon$-guarantee. The algorithm takes $O(m \log m)$ time and compresses the slopes into $64t+m\lceil \log t \rceil$ bits of space. 
\end{lemma}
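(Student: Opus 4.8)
The plan is to recognise the slope-compression task as the classical minimum interval \emph{stabbing} (piercing) problem: we seek the smallest set of reals such that every slope interval $I_j=(a_j,b_j)$ contains at least one of them. First I would settle correctness, i.e. that replacing each original $\id{slope}_j$ by any value inside $I_j$ preserves the $\varepsilon$-guarantee. This is immediate from the construction behind \cref{lem:optimal-pla-model}: that algorithm does not produce a single slope but the whole interval $I_j$ of slopes yielding an $\varepsilon$-approximate segment for the $j$-th range, so any point of $I_j$ — in particular the one our compressor assigns — gives a valid $\varepsilon$-approximate segment, and the resulting index still answers every query within the same error bound. This is exactly what ``lossless'' means in this context.

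Next I would show that the greedy in the statement produces a feasible solution of size $t$. The algorithm partitions the (left-endpoint-sorted) intervals into maximal groups whose members share a common point; by Helly's theorem on the line, a prefix of intervals pairwise intersects iff its running intersection $(l,r)=(\max_j a_j,\min_j b_j)$ is nonempty, which is precisely the invariant the scan maintains. The single slope drawn from $(l,r)$ lies in every interval of its group, so the $t$ chosen slopes (one per group) stab all $m$ intervals. This yields the upper bound $t$ on the achievable number of distinct slopes while keeping the $\varepsilon$-guarantee.

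The hard part is the matching lower bound: that no assignment can use fewer than $t$ distinct slopes, so the computed $t$ is optimal. I would exhibit $t$ pairwise-disjoint intervals by charging one witness to each group. For group $i$ let $(l_i,r_i)$ be its common intersection, let $B_i$ be the interval attaining the minimum right endpoint (so $r_i=b_{B_i}$), and let $F_{i+1}$ be the first interval of group $i+1$, whose insertion emptied the running intersection. Because the intervals are sorted by left endpoint we have $a_{F_{i+1}}\ge l_i$, so the emptiness condition forces $a_{F_{i+1}}\ge r_i=b_{B_i}$; together with $a_{B_{i+1}}\ge a_{F_{i+1}}$ this gives $a_{B_{i+1}}\ge b_{B_i}$, i.e. $B_{i+1}$ begins no earlier than $B_i$ ends. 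Hence $B_1,\dots,B_t$ are pairwise disjoint, and since a single real value belongs to at most one of a family of disjoint intervals, every valid set of distinct slopes must have size at least $t$. This min--max duality (minimum piercing set versus maximum disjoint subfamily) is the crux of the argument, and establishing the disjointness of the witnesses $B_i$ is where the real work lies.

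Finally I would bound the resources. The lexicographic sort costs $O(m\log m)$, and the subsequent single scan — maintaining the two running quantities $\max a$ and $\min b$ and emitting a slope at each break — costs $O(m)$, so the total time is $O(m\log m)$, dominated by the sort. For space, the distinct slopes are stored once in a table of $t$ floats of $64$ bits each ($64t$ bits), while each of the $m$ segments keeps only an index into this table, i.e. $\lceil\log t\rceil$ bits; summing these two contributions gives $64t+m\lceil\log t\rceil$ bits, as claimed.
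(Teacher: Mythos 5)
Your proposal is correct and follows the same algorithmic route as the paper (lexicographic sort, maximal prefixes of mutually intersecting slope intervals, one representative per group, table of $t$ floats plus $m$ indices of $\lceil\log t\rceil$ bits). In fact it proves strictly more than the paper's in-text proof, which only carries out the final space accounting and explicitly defers the ``details and optimality proof'' to the full version: your Helly-type feasibility argument and, especially, the exhibition of $t$ pairwise-disjoint witness intervals $B_1,\dots,B_t$ (giving the matching lower bound on any piercing set, hence on the number of distinct slopes) supply exactly the deferred content, and the disjointness chain $a_{B_{i+1}}\ge a_{F_{i+1}}\ge r_i=b_{B_i}$ is sound. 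The only point worth flagging is that, when a segment's slope is moved within $I_j$, the intercept must in general be readjusted to a value compatible with the new slope (the feasible (slope, intercept) pairs form a convex region whose slope-projection is $I_j$); this is implicit in the paper's setup and does not affect your argument, but a one-line remark would make the ``lossless'' claim airtight.
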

\begin{proof}
The compressed space occupancy of the $t$ distinct slopes in $T$ is, assuming double-precision floats, $64t$ bits. The new slopes $\id{slope}'_j$ are still $m$ in their overall number, but each of them can be encoded as the position $0, \dots, t-1$ into $T$ of its corresponding double-precision float.
\end{proof}

An interesting future work is to experiment how much the use of universal coders for reals \cite{Lindstrom:2018}, as an alternative to floating-point numbers, can further reduce the additive term $64 t$.

\section{The Distribution-Aware \pgmindex}\label{sec:distribution-aware-pgm-index}

The \pgmindex of \cref{thm:pgm-index} implicitly assumes that the queries are uniformly distributed, but this seldom happens in practice. For example, queries in search engines are very well known to follow skewed distributions such as the Zipf's law~\cite{Witten:1999}. In such cases, it is desirable to have an index that answers the most frequent queries faster than the rare ones, so to achieve a higher query throughput. Previous work exploited query distribution in the design of binary trees~\cite{Bent:1985}, Treaps~\cite{Seidel:1996}, and skip lists~\cite{Bagchi:2005}, to mention a few.

In this section, we introduce an orthogonal approach that builds upon the \pgmindex by proposing a variant that adapts itself not only to the distribution of the input keys but also to the distribution of the queries. This turns out to be the {\em first distribution-aware learned index} to date, with the additional positive feature of being very succinct in space. 

Formally speaking, given a sequence $S=\{(k_i,p_i)\}_{i=1,\dots,n}$, where $p_i$ is the probability to query the key $k_i$ (that is assumed to be known), we want to solve the distribution-aware dictionary problem, which asks for a data structure that searches for a key $k_i$ in time $O(\log (1/p_i))$ so that the average query time coincides with the entropy of the query distribution $\mathcal H=\sum_{i=1,\dots,n} p_i \log (1/p_i)$.

We note that the algorithm of \cref{lem:optimal-pla-model} can be modified so that, given a $y$-range for each one of $n$ points in the plane, finds also the set of all (segment) directions that intersect those ranges in $O(n)$ time (see \cite{ORourke:1981}). This corresponds to find the optimal \gls{plamodel} whose individual segments guarantee an approximation which is within the $y$-range given for each of those points. Therefore, our key idea is to define, for every key $k_i$, a $y$-range of size $y_i = \min{\{\/1/p_i, \varepsilon\}}$, and then apply the algorithm of \cref{lem:optimal-pla-model} on that set of keys and $y$-ranges. Clearly, for the keys whose $y$-range is $\varepsilon$ we can use \cref{thm:pgm-index} and derive the same space bound of $O(m)$; whereas for the keys whose $y$-range is $1/p_i < \varepsilon$ we observe that these keys are no more than $\varepsilon$ (in fact, the $p_i$s sum up to $1$), but they are possible spread among all position in $A$ and thus they induce in the worst case $2\varepsilon$ extra segments. Therefore, the total space occupancy of the bottom level of the index is $\Theta(m+\varepsilon)$, where $m$ is the one defined in \cref{thm:pgm-index}. Now, let us assume that the search for a key $k_i$ arrived at the last level of this Distribution-Aware \pgmindex, and thus we know in which segment to search for $k_i$: the final binary search step within the $\varepsilon$-approximate range returned by that segment takes $O(\log \min\{1/p_i, \varepsilon\}) = O(\log (1/p_i))$ as we aimed for.

We are left with showing how to find that segment in a distribution-aware manner. We proceed similarly to the Recursive \pgmindex but with a careful design of the recursive step because of the probabilities (and thus the variable $y$-ranges) assigned to the recursively defined set of keys.

Let us consider the segment covering the sequence $S_{[a,b]}=\{(k_a,p_a), \dots, (k_b, p_b)\}$, denote by $q_{a,b}=\max_{i\in[a,b]}p_i$ the maximum probability of a key in $S_{[a,b]}$, and by  $P_{a,b} = \sum_{i=a}^b p_i$ the cumulative probability of all keys in $S_{[a,b]}$ (which is indeed the probability to end up in that segment when searching for one of its keys). We create the new set $S'=\{\dots,\allowbreak (k_a,q_{a,b}/P_{a,b}), \dots\}$ formed by the first key $k_a$ covered by each segment (as in the recursive \pgmindex) and setting its associated probability to $q_{a,b}/P_{a,b}$. Then, we construct the next upper level of the Distribution-Aware \pgmindex by applying the algorithm of \cref{lem:optimal-pla-model} on $S'$. If we iterate the above analysis for this new level of weighted segments, we conclude that: if we know from the search executed on the levels above that $k_i \in S_{[a,b]}$, the time cost to search for $k_i$ in this level is $O(\log\min\{P_{a,b}/q_{a,b}, \varepsilon\}) = O(\log (P_{a,b}/p_i))$.

Let us repeat this argument for another upper level in order to understand the influence on the search time complexity. We denote the segment that covers the range of keys which include $k_i$ with $S_{[a',b']} \supset S_{[a,b]}$, the cumulative probability with $P_{a',b'}$, and thus assign to its first key $k_{a'}$ the probability $r/P_{a',b'}$, where $r$ is the maximum probability of the form $P_{a,b}$ of the ranges included in $[a',b']$. In other words, if $[a',b']$ is partitioned into $\{z_1,\dots,z_c\}$, then $r=\max_{i\in[1,c)} P_{z_i,z_{i+1}}$. Reasoning as done previously, if we know from the search executed on the levels above that $k_i \in S_{[a',b']}$, the time cost to search for $k_i$ in this level is $O(\log \min{\{\/P_{a',b'}/r, \varepsilon\}}) = O(\log (P_{a',b'}/P_{a,b}))$ because $[a,b]$ is, by definition, one of these ranges in which $[a',b']$ is partitioned. 
 
Repeating this design until one single segment is obtained (whose cumulative probability is one), we get a total time cost for the search in all levels of the \pgmindex equal to a sum of logarithms whose arguments ``cancel out'' and get $O(\log (1/p_i))$.  

As far as the space bound is concerned, we recall that the number of levels in the \pgmindex is $L = O(\log_c m)$ with $c \geq 2 \varepsilon$, and that we have to account for the $\varepsilon$ extra segments per level returned by the algorithm of \cref{lem:optimal-pla-model}. Consequently, this distribution-aware variant of the \pgmindex takes $O(m + L \varepsilon)$ space. This space bound is indeed $O(m)$ because $\varepsilon$ is a constant parameter (see \cref{sec:experiments}).
\begin{theorem}
  \label{thm:pgm-index-distribution-aware}
  Let $A$ be an ordered array of $n$ keys $k_i$ drawn from a universe $\mathcal{U}$, which are queried with (known) probability $p_i$, and let $\varepsilon\geq 1$ be a fixed integer parameter. The Distribution-Aware Recursive \pgmindex with parameter $\varepsilon$ indexes the array $A$ in $O(m)$ space and answers queries in $O(\mathcal H)$ average time, where $\mathcal H$ is the entropy of the query distribution, and $m$ is the number of segments of the optimal \gls{plamodel} for the keys in $A$ with error $\varepsilon$.
\end{theorem}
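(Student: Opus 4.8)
The plan is to prove the two claims — the $\Theta(m)$ space bound and the $O(\mathcal{H})$ average query time — separately, since they depend on different parts of the construction described before the statement.

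For the space bound, I would first recall from \cref{thm:pgm-index} that each recursive application of \cref{lem:optimal-pla-model} shrinks the number of indexed points by a factor $c \geq 2\varepsilon$, so the index has $L = O(\log_c m)$ levels. The only new ingredient is the effect of the keys assigned a small $y$-range $1/p_i < \varepsilon$: at the bottom level there are at most $\varepsilon$ of them (because the $p_i$ sum to one), and, being possibly scattered across $A$, they force at most $O(\varepsilon)$ additional segments, giving a bottom level of size $\Theta(m+\varepsilon)$ as already observed. Charging an analogous $O(\varepsilon)$ overhead to each of the $L$ levels yields $O(m + L\varepsilon)$ total, which collapses to $O(m)$ since $\varepsilon$ is a constant and $L\varepsilon = O(\log m) = o(m)$.

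For the time bound, the heart of the argument is a telescoping of the per-level binary-search costs along the unique chain of nested segments $S_{[a_0,b_0]} \subseteq S_{[a_1,b_1]} \subseteq \dots \subseteq S$ that cover the queried key $k_i$ at the successive levels, with cumulative probabilities $P^{(0)} \leq P^{(1)} \leq \dots \leq P^{(L-1)} = 1$. I would show, exactly as sketched before the statement, that the final within-segment search at the bottom level costs $O(\log(1/p_i))$, whereas the search at each higher level — which routes toward the child lying on the chain — costs $O(\log(P^{(\ell)}/P^{(\ell-1)}))$. The bound at a generic level rests on the recursive assignment of probability $q_{a,b}/P_{a,b}$ to the first key of each segment, combined with the inequalities $q_{a_0,b_0} \geq p_i$ and $r \geq P^{(\ell-1)}$ (the maximum cumulative probability among the children of a node on the chain dominates that of its on-chain child). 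Multiplying the arguments of these logarithms cancels every intermediate $P^{(\ell)}$, so the sum telescopes to $O(\log(1/p_i))$ for a single query. Averaging over the query distribution then gives $\sum_i p_i\, O(\log(1/p_i)) = O(\mathcal{H})$, which is the claimed bound.

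The step I expect to be the main obstacle is nailing down the generic per-level cost $O(\log(P^{(\ell)}/P^{(\ell-1)}))$, since it hinges on two facts that must be checked carefully: that every probability assigned during the recursion lies in $(0,1]$ — so that the $y$-range $\min\{1/\text{prob},\varepsilon\}$ is well-defined and at least $1$, which is needed for \cref{lem:optimal-pla-model} to apply — and that each assigned value is large enough to make the within-segment search fit its telescoping term. The two ``max'' inequalities above are precisely what close this gap, and I would devote most of the effort to stating them cleanly and verifying the top level, where $P^{(L-1)} = 1$ supplies the last cancellation.
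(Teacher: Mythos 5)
Your proposal follows essentially the same route as the paper: the space bound via the $\Theta(m+\varepsilon)$ bottom level plus an $O(\varepsilon)$ surcharge on each of the $L=O(\log_c m)$ levels, and the time bound via the recursive $q_{a,b}/P_{a,b}$ probability assignment whose per-level costs $O(\log(P^{(\ell)}/P^{(\ell-1)}))$ telescope to $O(\log(1/p_i))$, then averaging to get $\mathcal H$. The paper presents this as a running construction-and-analysis in the section preceding the theorem rather than as a separate proof, but the decomposition, the two ``max'' inequalities, and the telescoping cancellation you identify are exactly its argument.
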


\newpage
\section{The Multicriteria \pgmindex}
\label{sec:multicriteria-pgm-index}
Tuning a data structure to match the application's needs is often a difficult and error-prone task for a software engineer, not to mention that these needs may change over time due to mutations in data distribution, devices, resource requirements, and so on. The typical approach is a grid search on the various instances of the data structure to be tuned until the one that matches the application's needs is found. However, the data structure may be not flexible enough to adapt to those changes, or the search space can be so huge that the reorganisation of the data structure takes too much time.

In the rest of this section, we exploit the space-time flexibility of the \pgmindex by showing that this tuning process can be efficiently automated over this data structure via an optimisation strategy that: (i) given a space constraint outputs the \pgmindex that minimises its query time; or symmetrically, (ii) given a maximum query time outputs the \pgmindex that minimises its space footprint.

\myparagraph{The time-minimisation problem}
According to \cref{thm:pgm-index}, the query time of a Recursive \pgmindex can be described as $t(\varepsilon) = c (\log_{2\varepsilon} m) \log(2\varepsilon/B)$, where $B$ is the page size of the \gls{em} model, $m$ is the number of segments in the last level, and $c$ depends on the access latency of the memory. For the space, we introduce $s_\ell(\varepsilon)$, which denotes the number of segments needed to have precision $\varepsilon$ over the keys available at level $\ell$ of the Recursive \pgmindex, and compute the overall number of segments as $s(\varepsilon) = \sum_{\ell=1}^L s_\ell(\varepsilon)$. By \cref{lem:lower-bound-2eps}, we know that  $s_L(\varepsilon) = m \le n / (2\varepsilon)$ for any $\varepsilon \geq 1$ and that $s_{\ell-1}(\varepsilon) \leq s_\ell(\varepsilon)/(2\varepsilon)$. So that $s(\varepsilon) \leq \sum_{\ell=0}^L m/(2\varepsilon)^\ell = (2\varepsilon m-1)/(2\varepsilon-1)$.

Given a space bound $s_{max}$, the ``time-minimisation problem'' consists of minimising $t(\varepsilon)$ subject to $s(\varepsilon) \leq s_{max}$.\footnote{For simplicity, we assume that a disk page contains exactly $B$ keys. This assumption can be relaxed by putting the proper machine- and application-dependent constants in front of $t(\varepsilon)$ and $s(\varepsilon)$.} The greatest challenge here is that we do not have a closed formula for $s(\varepsilon)$, but only an upper bound which does not depend on the underlying dataset as $s(\varepsilon)$ does. \cref{sec:experiments} will show that in practice we can model $m=s_L(\varepsilon)$ with a simple power-law having the form $a \varepsilon^{-b}$, whose parameters $a$ and $b$ will be properly estimated on the dataset at hand. The power-law covers both the pessimistic case of \cref{lem:lower-bound-2eps} and the best case in which the dataset is strictly linear.

Clearly, the space occupancy decreases with increasing $\varepsilon$, whereas the query time $t(\varepsilon)$ increases with $\varepsilon$, since the number of keys on which it is executed a binary search at each level equals $2\varepsilon$. Therefore, the time-minimisation problem reduces to the problem of finding the value of $\varepsilon$ for which $s(\varepsilon)=s_{max}$ because it is the lowest $\varepsilon$ that we can afford. Such value of $\varepsilon$ could be found by a binary search in the bounded interval $\mathcal E = [B/2, n/2]$ which is derived by requiring that each model errs at least a page size (i.e. $2\varepsilon \geq B$), since lower $\varepsilon$ values do not save I/Os, and by observing that one model is the minimum possible space (i.e.  $2\varepsilon \leq n$, by \cref{lem:lower-bound-2eps}). However, provided that our power-law approximation holds, we can speed up the search of that "optimal" $\varepsilon$ by guessing the next value of $\varepsilon$ rather than taking the midpoint of the current search interval. In fact, we can find the root of $s(\varepsilon) - s_{max}$, i.e. the value $\varepsilon_g$ for which $s(\varepsilon_g) = s_{max}$. We emphasise that such $\varepsilon_g$ may not be the solution of our problem, as it may be the case that the approximation or the fitting of $s(\varepsilon)$ by means of a power-law is not precise. Thus, more iterations of the search may be needed to find the optimum $\varepsilon$ value; anyway, we guarantee to be always faster than a binary search by gradually switching to it. Precisely, we bias the guess $\varepsilon_g$ towards the midpoint $\varepsilon_m$ of the current search range via a simple convex combination of the two~\cite{Graefe:2006}.

\myparagraph{The space-minimisation problem}
Given a time bound $t_{max}$, the space-minimisation problem consists of minimising $s(\varepsilon)$ subject to $t(\varepsilon) \leq t_{max}$. As for the problem above, we could perform a binary search inside the interval $\mathcal E = [B/2, n/2]$ and look for the maximum $\varepsilon$ which satisfies that time constraint. Instead, we speed up this process by guessing the next iterate for $\varepsilon$ via the equation $t(\varepsilon) =t_{max}$, that is solving $c (\log_{2\varepsilon} s_L(\varepsilon)) \log(2\varepsilon/B)=t_{max}$, in which $s_L(\varepsilon)$ is replaced by the power-law approximation $a\varepsilon^{-b}$, for proper $a$ and $b$, and $c$ is replaced by the measured memory latency of the given machine. 

Although effective, this approach raises a subtle issue, namely, the time model could not be a correct estimate of the actual query time because of hardware-dependent factors such as the presence of several caches and the CPU pre-fetching. To further complicate this issue, we note that both $s(\varepsilon)$ and $t(\varepsilon)$ depend on the power-law approximation $a\varepsilon^{-b}$.

For these reasons, instead of using the time model $t(\varepsilon)$ to steer the search, we measure and use the actual average query time $\overline t(\varepsilon)$ of the \pgmindex over a fixed batch of random queries. Moreover, instead of performing a binary search inside the whole $\mathcal E$, we run an exponential search starting from the solution of the dominating term $c\log(2\varepsilon/B)=t_{max}$, i.e. the cost of searching the data. Eventually, we stop the search of the best $\varepsilon$ as soon as the searched range is smaller than a given threshold because $\overline t(\varepsilon)$ is subject to measurement errors (e.g. due to an unpredictable CPU scheduler).

\section{Experiments}\label{sec:experiments}

We experimented with an implementation in C\texttt{++} of the \pgmindex on a machine with a 2.3 GHz Intel Xeon Gold and 192 GiB memory.\footnote{The implementation will be released on GitHub with the acceptance of this paper.} We used the following three standard datasets, each having different data distributions, regularities and patterns: 

\begin{itemize}

\item \emph{Web logs}~\cite{Kraska:2018,Galakatos:2019} contains timestamps of about 715M requests to a web server; 
\item \emph{Longitude}~\cite{OpenStreetMap} contains longitudes of about 166M points-of-interest from OpenStreetMap; 

\item \emph{IoT}~\cite{Kraska:2018,Galakatos:2019} contains timestamps of about 26M events recorded by IoT sensors installed throughout an academic building. 

\end{itemize}

We also generated some synthetic datasets according to the uniform distribution in the interval $[0,u)$, to the Zipf distribution with exponent $s$, and to the lognormal distribution with standard deviation $\sigma$, they will allow to test thoroughly the various indexes.

\subsection{Space occupancy of the \pgmindex}
\label{ssec:exp-space-occupancy}

In this set of experiments, we estimated the size of the optimal \gls{plamodel} (see \cref{ssec:piecewise-linear-model}) returned by our implementation of~\cite{Xie:2014}, which provides the segments stored in the bottom level of the \pgmindex, and compared it against the non-optimal \gls{plamodel} computed with the greedy shrinking cone algorithm~\cite{Galakatos:2019,Sklansky:1980} used in the \fitingtree~\cite{Galakatos:2019}. This comparison is important because the size of a \gls{plamodel} is the main factor impacting the space footprint of a learned index based on linear models.

\cref{tab:synthetic} shows that on synthetic datasets of $10^9$ keys the improvements (i.e. relative change in the number of segments) ranged from 20.8\% to 69.4\%. \cref{fig:vs-fitting-tree} confirms these trends also for real-world datasets, on which the improvements ranged from 37.7\% to 63.3\%. For completeness, we report that the optimal algorithm with $\varepsilon=8$ built a \gls{plamodel} for Web logs in 2.59 seconds, whereas it took less than 1 second for \emph{Longitude} and \emph{IoT} datasets. This means that the optimal algorithm of \cref{ssec:piecewise-linear-model} can scale to even larger datasets.

\begin{table}[tb]
  \centering
  \begin{tabular}{l r r r r r}
    \toprule
    \multirow{2}{1cm}{Dataset} & \multicolumn{5}{c}{$\varepsilon$} \\
    \cmidrule{2-6}
                           &    8 &   32 &  128 &  512 & 2048 \\
    \midrule
    Uniform $u=2^{22}$     & 33.8 & 59.4 & 66.5 & 68.6 & 68.8 \\
    Uniform $u=2^{32}$     & 65.8 & 68.3 & 68.9 & 69.4 & 68.7 \\
    Zipf $s=1$             & 47.9 & 59.0 & 62.8 & 44.7 & 29.0 \\
    Zipf $s=2$             & 45.3 & 40.2 & 24.2 & 20.8 & 21.6 \\
    Lognormal $\sigma=0.5$ & 66.1 & 68.5 & 68.8 & 62.1 & 35.6 \\
    Lognormal $\sigma=1.0$ & 66.1 & 68.4 & 69.0 & 61.9 & 34.5 \\
    \bottomrule
  \end{tabular} 
  \caption{Space savings of \pgmindex with respect to a \fitingtree for a varying $\varepsilon$ on six synthetic datasets of 1G elements generated according to the specified distributions. The \pgmindex saved from 20.8\% to 69.4\%.}
  \label{tab:synthetic}
\end{table}

\begin{figure}[t]
  \begin{tikzpicture}
\pgfplotsset{
  colormap={printerfriendly}{rgb255=(67,162,202) rgb255=(168,221,181) rgb255=(224,243,219) },
  cycle list/.define={printerfriendly}{[of colormap=printerfriendly]},
}
\pgfplotsset{
    /pgfplots/layers/mylayers/.define layer set={
        axis background,axis grid,axis ticks,main,axis lines,axis tick labels,
        axis descriptions,axis foreground
    }{/pgfplots/layers/standard},
}
\begin{axis}[
  width=1.05\columnwidth,
  ybar=2pt,
  bar width=4pt,
  height=6.5cm,
  enlarge x limits=0.08,
  xmode=log,
  xtick=data,
  log basis x=2,
  tick align=center,
  tick pos=left,
  minor y tick num=4,
  xlabel=$\varepsilon$,
  ylabel={Space saving (\%)},
  grid=both,
  grid style={dotted},
  legend entries={{Web logs},{Longitude},{IoT}},
  legend cell align={left},
  legend style={
    at={(0.987,0.98)},
    anchor=north east,
    draw=white!80.0!black,
    /tikz/every even column/.append style={column sep=3mm},
    nodes={scale=0.7, transform shape}
  },
  mark options={solid},
  every axis plot/.append style={fill,draw=black},
  cycle list name=printerfriendly,
  legend image code/.code={
    \draw [fill=#1,draw=black] (0cm,-0.05cm) rectangle (0.4cm,0.08cm);
  },
  preaction={draw=black!95, line width=0.75pt},
  set layers=mylayers,
]

\addplot
table {%
8                            40.1
16                            45.7
32                            49.8
64                            52.7
128                            54.3
256                            54.3
512                            53.7
1024                            51.0
2048                            46.1
};

\addplot
table {%
8                            59.3
16                            63.3
32                            62.2
64                            56.2
128                            49.9
256                            45.8
512                            43.4
1024                            40.6
2048                            38.4
};

\addplot 
table {%
8                            46.1
16                            48.5
32                            50.7
64                            47.9
128                            46.8
256                            41.0
512                            37.7
1024                            44.0
2048                            43.4
};

\end{axis}
\end{tikzpicture}
  \vspace*{-0.5\baselineskip}
  \caption{The \pgmindex saved from 37.7\% to 63.3\% space with respect to a \fitingtree over the three real-world datasets. The construction time complexity of the two approaches is the same in theory (i.e. linear in the number of processed keys) and in practice (a couple of seconds, up to hundreds of millions of keys).}
  \label{fig:vs-fitting-tree}
\end{figure}

Since it appears difficult to prove a mathematical relationship between the number of input keys and the number of $\varepsilon$-approximate segments (other than the rather loose bound we proved in \cref{lem:lower-bound-2eps}), we pursued an empirical investigation on this relation because it quantifies the space improvement of learned indexes with respect to classic indexes. \cref{fig:number-of-segments} shows that, even when $\varepsilon$ is as little as $8$, the number $m$ of segments is at least two orders of magnitude smaller than the original datasets size $n$. This reduction gets impressively evident for larger values of $\varepsilon$, reaching five orders of magnitude. 
\begin{figure}[!t]
  \begin{tikzpicture}
\begin{axis}[
  width=\columnwidth,
  height=6.5cm,
  xmin=8, xmax=2048,
  xmode=log,
  xtick=data,
  ymode=log,
  log basis x=2,
  tick align=outside,
  tick pos=left,
  xlabel=$\varepsilon$,
  ylabel= {Ratio $m/n$},
  grid=major,
  grid style={dotted},
  legend entries={{Web logs},{Longitude},{IoT}},
  legend cell align={left},
  legend style={at={(0.987,0.98)},draw=white!80.0!black,nodes={scale=0.7, transform shape}},
  mark options={solid},
]

\addplot [semithick,black,mark=x,mark size=2.6]
table [y expr=\thisrowno{1}/714943653] {%
8     3909877
16    2035443
32    963148
64    423360
128   177894
256   74366
512   31608
1024  14307
2048  7204
};

\addplot [semithick,black,solid,mark=o,mark size=2.6]
table [y expr=\thisrowno{1}/166234253] {%
8     479569
16    154814
32    50248
64    19268
128   8831
256   4483
512   2425
1024  1389
2048  832
};

\addplot [semithick,black,solid,mark=triangle,mark size=2.5]
table [y expr=\thisrowno{1}/26281412]  {%
8     150482
16    73419
32    33806
64    16429
128   8253
256   4585
512   2769
1024  1493
2048  810
};

\addplot [black,dashed,mark size=2.6,domain=8:2048] {1/(2*x)};

\end{axis}
\end{tikzpicture}
  \vspace{-1\baselineskip}
  \caption{A log-log plot with the ratio between the number of segments $m$, stored in the last level of a \pgmindex, and the size $n$ of the real-world datasets as a function of $\varepsilon$. For comparison, the plot shows with a dashed line the function $1/(2\varepsilon)$ which is the fraction of the number of keys stored in the level above the input data of \bplustree with $B=2\varepsilon$ (see text). Note that $m$ is 2--5 orders of magnitude less than $n$.}
  \label{fig:number-of-segments}
\end{figure}
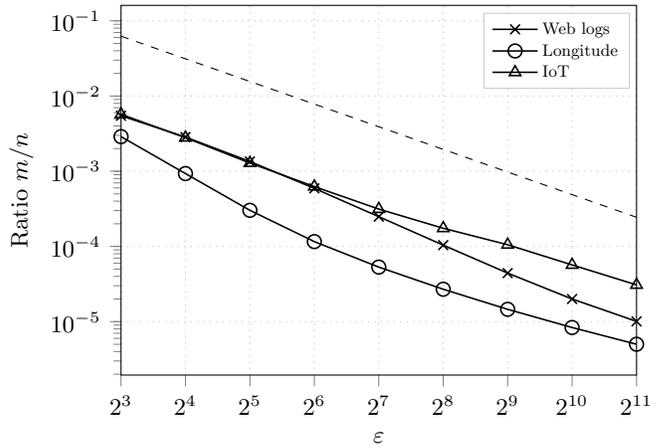

\subsection{Query performance of the \pgmindex}\label{ssec:exp-query-performance}

We evaluated the query performance of the \pgmindex and other indexing data structures on \emph{Web logs} dataset, the biggest and most complex dataset available to us. We have dropped the comparison against the \fitingtree, because of the evident structural superiority of the Recursive \pgmindex and its indexing of the optimal (minimum) number of segments in the bottom level (see \cref{fig:number-of-segments}). Nonetheless, we will investigate the performance of some variants of the \pgmindex that will provide a clear picture of the improvements determined by its recursive indexing, compared to the classic approaches based on multiway search trees (\`a la \fitingtree), \gls{csstree}~\cite{Rao:1999} or \bplustree. 

In this experiment, the dataset was loaded in memory as a contiguous array of integers represented with 8 bytes and with 128 bytes payload. Slopes and intercepts were stored as double-precision floats. Each index was presented with 10M queries randomly generated on the fly. The next three paragraphs present, respectively, the query performance of the three indexing strategies for the \pgmindex, a comparison between the \pgmindex and traditional indexes, and a comparison between the \pgmindex and the \gls{rmi}~\cite{Kraska:2018} under this experimental scenario.

\myparagraph{\pgmindex variants}
The three indexing strategies experimented for the \pgmindex are binary search, multiway tree (specifically, we implemented the \gls{csstree}~\cite{Rao:1999}) and our novel recursive construction (see \cref{ssec:indexing-pla-model}). We refer to them with PGM$\circ$BIN, PGM$\circ$CSS and PGM$\circ$REC, respectively. We set $\varepsilon_\ell=4$ for all but the last level of PGM$\circ$REC, that is the one that includes the segments built over the input dataset. Likewise, the node size of the \gls{csstree} was set to $B=2\varepsilon_\ell$ for a fair comparison with  PGM$\circ$REC. \cref{fig:space-time-variants} shows that PGM$\circ$REC dominates PGM$\circ$CSS for $\varepsilon \leq 256$, and has better query performance than PGM$\circ$BIN. The advantage of PGM$\circ$REC over PGM$\circ$CSS is also evident in terms of index height since the former has five levels whereas the latter has seven levels, thus PGM$\circ$REC experiences a shorter traversal time which is induced by a higher branching factor (as conjectured in \cref{ssec:indexing-pla-model}). For $\varepsilon > 256$ all the three strategies behaved similarly because the index was so small to fit into the L2 cache. 

\begin{figure}[!t]
  \begin{tikzpicture}
\pgfplotsset{
  every axis plot post/.append style={
    every mark/.append style={line width=1pt}
  },
  discard if not/.style 2 args={
    x filter/.append code={
      \edef\tempa{\thisrow{#1}}\edef\tempb{#2}
      \ifx\tempa\tempb\else\def\pgfmathresult{inf}\fi
    }
  }
}
\begin{axis}[
  width=\columnwidth,
  height=6.5cm,
  xlabel={Time (ns)},
  ylabel={Index space (MiB)},
  enlarge x limits=0.08,
  enlarge y limits=0.08,
  tick align=outside,
  tick pos=left,
  grid=both,
  grid style={dotted},
  minor tick num=1,
  scaled y ticks=false,
  ytick distance={2097152},
  yticklabel = {\pgfkeys{/pgf/fpu}\pgfmathparse{\tick/1024/1024  }\pgfmathprintnumber[precision=1]{\pgfmathresult}},
  legend columns=1,
  legend cell align={left},
  legend entries={
    PGM$\circ$BIN,
    PGM$\circ$CSS,
    PGM$\circ$REC,
    $\varepsilon=64$,
    $\varepsilon=128$,
    $\varepsilon=256$,
    $\varepsilon=512$,
    $\varepsilon=1024$,
    $\varepsilon=2048$,
  },
  legend style={
    at={(0.987,0.98)},
    anchor=north east,
    draw=white!80.0!black,
    /tikz/every even column/.append style={column sep=3mm},
    nodes={scale=0.7, transform shape}
  },
]
\addlegendimage{mark size=2.5, mark=x, only marks, black}
\addlegendimage{mark size=2.5, mark=triangle, only marks, black}
\addlegendimage{mark size=2.5, mark=pentagon, only marks, black}
\pgfplotsinvokeforeach{1,...,6} {
  \addlegendimage{mark=*, only marks, index of colormap={#1 of colormap/PuBuGn-8}}
}

\pgfplotsforeachungrouped \strategy/\m in {
    BinarySearchStrategy/x,
    CSSTreeStrategy64/triangle,
    RecursiveStrategy4/pentagon}{
  \edef\temp{\noexpand\addplot [
    mark size=2.5,
    mark=\m,
    scatter,
    colormap/PuBuGn-8,
    colormap access=piecewise const,
    scatter src=explicit,
    only marks,
    discard if not={strategy}{\strategy},
  ] table [
    x=time,
    y=bytes,
    meta=logerror,
    col sep=comma,
  ] {gfx/data-xeon-gold.csv};
}\temp}
\end{axis}
\end{tikzpicture}
  \caption{The query performance of several configurations of the \pgmindex. The Recursive \pgmindex, depicted as a pentagon, had better space and time performance than all the other configurations.}
  \label{fig:space-time-variants}
\end{figure}

\begin{figure}[!t]
  \begin{tikzpicture}
\tikzset{
  every pin/.style={scale=0.7},
}
\pgfplotsset{
  every axis plot post/.append style={
    every mark/.append style={line width=1pt},
  },
  discard if not/.style 2 args={
    x filter/.append code={
      \edef\tempa{\thisrow{#1}}\edef\tempb{#2}
      \ifx\tempa\tempb\else\def\pgfmathresult{inf}\fi
    }
  }
}

\begin{axis}[
  width=\columnwidth,
  height=6.5cm,
  name=ax1,
  xmax=1830,
  enlarge x limits=0.08,
  enlarge y limits=0.08,
  xlabel={Time (ns)},
  ylabel={Index space (MiB)},
  tick align=outside,
  tick pos=left,
  grid=both,
  grid style={dotted},
  minor tick num=1,
  scaled y ticks=false,
  xtick distance={200},
  ytick distance={2097152},
  yticklabel = {\pgfkeys{/pgf/fpu}\pgfmathparse{\tick/1024/1024}\pgfmathprintnumber[precision=1]{\pgfmathresult}},
  legend columns=1,
  legend cell align={left},
  ymax=14400000,
  legend entries={
    PGM $\varepsilon=2^6$,
    PGM $\varepsilon=2^7$,
    PGM $\varepsilon=2^8$,
    PGM $\varepsilon=2^9$,
    PGM $\varepsilon=2^{10}$,
    PGM $\varepsilon=2^{11}$,
    RMI 10K,
    RMI 50K,
    RMI 100K,
    RMI 200K,
    RMI 400K,
  },
  legend style={
    at={(0.987,0.98)},
    anchor=north east,
    draw=white!80.0!black,
    nodes={scale=0.7, transform shape}
  },
]

\pgfplotsinvokeforeach{1,...,6} {
  \addlegendimage{mark=pentagon, only marks, index of colormap={#1 of colormap/PuBuGn-8}}
}
\pgfplotsinvokeforeach{2,3,5,7,9} {
  \addlegendimage{mark size=2.6, mark=square, only marks, index of colormap={#1 of colormap/RdPu-9}}
}

\addplot [
  scatter,
  mark size=2.5,
  mark=pentagon,
  only marks,
  colormap/PuBuGn-8,
  point meta rel=per plot,
  point meta=\thisrow{logerror},
  discard if not={strategy}{RecursiveStrategy4}]
table [x=time, y=bytes, col sep=comma] {gfx/data-xeon-gold.csv};

\addplot [
  scatter,
  only marks,
  colormap={mine}{indices of colormap=(2,3,5,7,9 of colormap/RdPu-9)},
  mark=square,
  mark size=2,
  point meta rel=per plot,
  scatter/use mapped color={draw=mapped color,fill=none}]
table [x=nanoseconds,y=stage_size,y expr={\thisrowno{0}*(2*8+2*4)},col sep=comma] {
stage_size,error_avg,error_std,nanoseconds
10000,2702,11224,1060
50000,601,2465,853
100000,326,1100,806
200000,185,595,748
400000,106,362,805
};

\addplot [scatter, only marks, colormap={mine}{indices of colormap=(8,5,2 of colormap/YlGn-9)}, mark=diamond, mark size=2.5, point meta rel=per plot,scatter/use mapped color={mapped color}]
table {
907 11173888
1474 5586944
1763 2801664
}
node[pos=0, pin=-45:{CSS-tree 4KiB}]{}
node[pos=0.5, pin=225:{CSS-tree 8KiB}]{}
node[pos=1, pin=225:{CSS-tree 16KiB}]{};

\addplot [only marks, draw=orange, mark=star, mark size=2.5,]
table {
1193 13987056
}
node[pin=below:{B\textsuperscript{+}-tree 16KiB}]{};;
\end{axis}

\end{tikzpicture}
  \caption{The Recursive \pgmindex improved uniformly \gls{rmi} with different second-stage sizes and traditional indexes with different page sizes over all possible space-time trade-offs. Results of traditional indexes for smaller page sizes are not shown because too far from the plot range. For example, the fastest \gls{csstree} occupied 341 MiB and was matched in performance by a \pgmindex of only 4 MiB (82.7$\times$ less space); the fastest \bplustree occupied 874 MiB and was matched in performance by a \pgmindex which occupied only 87 KiB (four orders of magnitude less space).}
  \label{fig:space-time-comparison}
\end{figure}

\myparagraph{\pgmindex vs traditional indexes}
We compared the \pgmindex against the cache-efficient \gls{csstree} and the classic \bplustree. For the former, we used our implementation. For the latter, we chose a well-known library~\cite{Bingmann:2013,Galakatos:2019,Kraska:2018}. 

The \pgmindex dominated these traditional indexes, as shown in  \cref{fig:space-time-comparison} for page sizes of 4--16 KiB. Performances for smaller page sizes were too far (i.e. worse) from the main plot range, and thus are not shown. For example, the fastest \gls{csstree} in our machine had page size of 128 bytes, occupied 341 MiB and was matched in query performance by a PGM$\circ$REC with $\varepsilon=128$ which occupied only 4 MiB (82.7$\times$ less space). As another example, the fastest \bplustree had page size of 256 bytes, occupied 874 MiB and was matched in query performance by a PGM$\circ$REC with $\varepsilon=4096$ which occupied only 87 KiB (four orders of magnitude less space). 

What is surprising in those plots is the improvement in space occupancy achieved by the \pgmindex which is four orders of magnitude with respect to the \bplustree and two orders of magnitude with respect to the \gls{csstree}. As stated in \cref{sec:introduction}, traditional indexes are blind to the data distribution, and they miss the compression opportunities which data trends offer. On the contrary, adapting to the data distribution through linear approximations allows the \pgmindex to uncover previously unknown space-time trade-offs, as demonstrated in this experiment.
For completeness, we report that on the 90.6 GiB of key-payload pairs the fastest \gls{csstree} took 1.2 seconds to construct, whereas the \pgmindex matching its performance in 82.7$\times$ less space took only 1.9 seconds more (despite using a single-threaded and non\hyp{}optimised computation of the \gls{plamodel}).

\myparagraph{\pgmindex vs known learned indexes}
\cref{fig:space-time-variants,fig:vs-fitting-tree} have shown that the \pgmindex improves the \fitingtree (see also the discussion at the beginning of this section). Here, we complete the comparison against the other known learned index, i.e. the 2-stage \gls{rmi} which uses a combination of linear and other models in its two stages. \cref{fig:space-time-comparison} shows that the \pgmindex dominates \gls{rmi}, it has indeed better latency guarantees because, instead of fixing the structure beforehand and inspecting the errors afterwards, it is dynamically and optimally adapted to the input data distribution while guaranteeing the desired $\varepsilon$-approximation and using the least possible space. The most compelling evidence is the \gls{mae} between the approximated and the predicted position, e.g., the \pgmindex with $\varepsilon=512$ needed about 32K segments and had \gls{mae} 226$\pm$139, while an \gls{rmi} with the same number of second stage models (i.e. number of models at the last level) had \gls{mae} 892$\pm$3729 (3.9$\times$ more). This means that \gls{rmi} experienced a higher and less predictable latency in the query execution. We report that \gls{rmi} took 30.4 seconds to construct, whereas the \pgmindex took only 3.1 seconds.

\myparagraph{Discussion} Overall, the experiments have shown that the \pgmindex is fast in construction (about 3 seconds to index a real-world table of 91 GiB with 715M key-value pairs) and has space footprint that is up to 63.3\% lower than what was achieved by a state-of-the-art \fitingtree.  Moreover, the \pgmindex dominated in space and time both the traditional and other learned index structures (e.g. the \gls{rmi}). In particular, it improved the space footprint of the \gls{csstree} by a factor 82.7$\times$ and the one of the \btree by more than four orders of magnitude, while achieving the same or even better query efficiency. 

\subsection{The Compressed \pgmindex}

We investigated the effectiveness of the compression techniques proposed in \cref{sec:compressed-pgm-index}. \cref{fig:slope-compression} shows that the slope compression algorithm reduced the {\em number} of distinct slopes significantly, up to 99.94\%, still preserving the same optimal number of segments. As far as the space occupancy is concerned, and considering just the last level of a \pgmindex which is the largest one, the reduction induced by the compression algorithm was up to 81.2\%, as shown in \cref{fig:slope-space-savings}. Note that in the \emph{Longitude} datasets for $\varepsilon \geq 2^9$ the slope compression is not effective enough. As a result, the mapping from segments to the slopes table causes an overhead that exceeds the original space occupancy of the segments. Clearly, a real-world application would turn off slope compression in such situations.

\begin{figure}[t]
  \begin{tikzpicture}

\pgfplotsset{
  colormap={printerfriendly}{rgb255=(67,162,202) rgb255=(168,221,181) rgb255=(224,243,219) },
  cycle list/.define={printerfriendly}{[of colormap=printerfriendly]},
}

\pgfplotsset{
    /pgfplots/layers/mylayers/.define layer set={
        axis background,axis grid,axis ticks,main,axis lines,axis tick labels,
        axis descriptions,axis foreground
    }{/pgfplots/layers/standard},
}

\begin{axis}[
  width=\columnwidth,
  ybar=2pt,
  bar width=4pt,
  height=6.5cm,
  enlarge x limits=0.08,
  ymin=0,
  xmode=log,
  xtick=data,
  log basis x=2,
  tick align=center,
  tick pos=left,
  minor y tick num=4,
  xlabel=$\varepsilon$,
  ylabel={Reduction in \# slopes (\%)},
  grid=both,
  grid style={dotted},
  legend entries={{Web logs},{Longitude},{IoT}},
  legend cell align={left},
  legend style={
    at={(0.987,0.98)},
    anchor=north east,
    draw=white!80.0!black,
    /tikz/every even column/.append style={column sep=3mm},
    nodes={scale=0.7, transform shape}
  },
  mark options={solid},
  every axis plot/.append style={fill,draw=black},
  cycle list name=printerfriendly,
  legend image code/.code={
    \draw [fill=#1,draw=black] (0cm,-0.05cm) rectangle (0.4cm,0.08cm);
  },
  preaction={draw=black!95, line width=0.75pt},
  set layers=mylayers,
]

\addplot
table {%
8		99.943
16		99.838
32		99.520
64		98.703
128		96.914
256		93.169
512     85.675
1024	72.559
2048	56.621
};

\addplot
table {%
8		99.234
16		96.526
32		89.034
64		74.107
128		52.587
256		29.868
512		12.825
1024	5.544
2048	1.322
};

\addplot 
table {%
8		98.414
16		96.673
32		93.229
64		87.589
128		80.722
256		74.351
512		68.003
1024	61.018
2048	54.444
};

\end{axis}
\end{tikzpicture}
  \vspace*{-0.5\baselineskip}
  \caption{The slope compression algorithm of \cref{lem:slope-compression-algorithm} reduces the number of distinct slopes by up to 99.9\%.}
  \label{fig:slope-compression}
\end{figure}

\begin{figure}[t]
  \begin{tikzpicture}

\pgfplotsset{
  colormap={printerfriendly}{rgb255=(67,162,202) rgb255=(168,221,181) rgb255=(224,243,219) },
  cycle list/.define={printerfriendly}{[of colormap=printerfriendly]},
}

\pgfplotsset{
    /pgfplots/layers/mylayers/.define layer set={
        axis background,axis grid,axis ticks,main,axis lines,axis tick labels,
        axis descriptions,axis foreground
    }{/pgfplots/layers/standard},
}

\begin{axis}[
  width=\columnwidth,
  ybar=2pt,
  ymax=110,
  bar width=4pt,
  height=6.5cm,
  enlarge x limits=0.08,
  xmode=log,
  xtick=data,
  log basis x=2,
  tick align=center,
  tick pos=left,
  minor y tick num=4,
  xlabel=$\varepsilon$,
  ylabel={Space saving (\%)},
  grid=both,
  grid style={dotted},
  legend entries={{Web logs},{Longitude},{IoT}},
  legend cell align={left},
  legend style={
    at={(0.987,0.98)},
    anchor=north east,
    draw=white!80.0!black,
    /tikz/every even column/.append style={column sep=3mm},
    nodes={scale=0.7, transform shape}
  },
  mark options={solid},
  every axis plot/.append style={fill,draw=black},
  cycle list name=printerfriendly,
  legend image code/.code={
    \draw [fill=#1,draw=black] (0cm,-0.05cm) rectangle (0.4cm,0.08cm);
  },
  preaction={draw=black!95, line width=0.75pt},
  set layers=mylayers,
]

\addplot
table {%
8		81.2
16		81.1
32		79.2
64		78.4
128		76.6
256		72.9
512     65.4
1024	53.8
2048	37.9
};

\addplot
table {%
8		80.5
16		76.2
32		68.7
64		53.8
128		32.3
256		11.1
512		-5.9
1024	-11.6
2048	-14.3
};

\addplot 
table {%
8		79.7
16		77.9
32		74.5
64		70.4
128		63.5
256		57.2
512		52.4
1024	45.4
2048	40.4
};

\end{axis}
\end{tikzpicture}
  \vspace*{-0.5\baselineskip}
  \caption{Slope compression reduces the {\em space} taken by the slopes up to 81.2\%. Longitude is the only dataset on which compression does not help for $\varepsilon\geq2^9$ because of its special features. In this case compression would not be adopted.}
  \label{fig:slope-space-savings}
\end{figure}

In information theory, the compressibility of data is measured with its entropy (as defined by Shannon). We conjecture that a similar measure, characterising ``difficult'' datasets like \emph{Longitude}, also exists in our ``geometric setting'' and is worth studying. Another interesting future work is to explore the relation between the algorithm of \cref{lem:optimal-pla-model} and the slope compression algorithm. To explain, recall that during the construction of the optimal \gls{plamodel} the range of slopes reduces each time a new point is added to the current convex hull (segment). Therefore ``shortening'' a segment, on the one hand, improves the performance of the slope compression algorithm (because it enlarges the sizes of the possible slope intervals), but on the other hand, it increases the overall number of segments. Given that \cref{lem:slope-compression-algorithm} offers a compressed space bound which depends on $m$ (the overall number of segments) and $t$ (the number of distinct segments), balancing the above two effects to achieve better compression is an intriguing extension of this paper.

Afterwards, we measured the query performance of the Compressed \pgmindex in which compression was activated over the intercepts and the slopes of the segments of all the levels. \cref{tab:compressed} shows that, with respect to the corresponding Recursive \pgmindex, the space footprint is reduced by up to 52.2\% at the cost of moderately slower queries (no more than 24.5\%).

\begin{table}
  \setlength{\tabcolsep}{0.5em}
  \centering
  \begin{tabular}{r r r r r r r}
    \toprule
    $\varepsilon=$    &   64 &  128 &  256 &  512 & 1024 & 2048 \\
    \midrule
    Space saving (\%) & 52.2 & 50.8 & 48.5 & 46.0 & 41.5 & 35.5 \\
    Time loss (\%)    & 13.7 & 22.6 & 24.5 & 15.1 & 11.7 &  9.9 \\
    \bottomrule
  \end{tabular}
  \caption{Query performance of the Compressed \pgmindex with respect to the corresponding Recursive \pgmindex.}
  \label{tab:compressed}
\end{table}

\subsection{The Multicriteria \pgmindex}\label{ssec:exp-multicriteria-pgm-index}

Our implementation of the Multicriteria \pgmindex operates in two modes: the time-minimisation mode (shortly, min-time) and the space-minimisation mode (shortly, min-space), which implement the algorithms described in \cref{sec:multicriteria-pgm-index}. In min-time mode, inputs to the program are $s_{max}$ and a tolerance \id{tol} on the space occupancy of the solution, and the output is the value of the error $\varepsilon$ which guarantees a space bound $s_{max} \pm \id{tol}$. In min-space mode, inputs to the program are $t_{max}$ and a tolerance \id{tol} on the query time of the solution, and the output is the value of the error $\varepsilon$ which guarantees a time bound $t_{max} \pm \id{tol}$ in the query operations. We note that the introduction of a tolerance parameter allows us to stop the search earlier as soon as any further step would not appreciably improve the solution (i.e., we seek only improvements of several bytes or nanoseconds). So \id{tol} is not a parameter that has to be tuned but rather a stopping criterion like the ones used in iterative methods.

To model the space occupancy of a \pgmindex, we studied empirically the behaviour of the number of segments $m_{opt}=s_L(\varepsilon)$ forming the optimal \gls{plamodel}, by varying $\varepsilon$ and by fitting ninety different functions over about two hundred points $(\varepsilon, s_L(\varepsilon))$ generated beforehand by a long-running grid search over our real-world datasets. Looking at the fittings, we chose to model $s_L(\varepsilon)$ with a power-law having the form $a \varepsilon^{-b}$. As further design choices we point out that: (i)~the fitting of the power-law was performed with the Levenberg-Marquardt algorithm, while root-finding was performed with Newton's method; (ii)~the search space for $\varepsilon$ was set to $\mathcal E = [8,n/2]$ (since a cache line holds eight 64 bits integers); and finally (iii)~the number of guesses was set to $2\lceil{\log\log\mathcal E}\rceil$.

\smallskip The following experiments were executed by addressing some use cases in order to show the efficacy and efficiency of the multicriteria \pgmindex.

\myparagraph{Experiments with the min-time mode}
Suppose that a database administrator wants the most efficient \pgmindex for the \emph{Web logs} dataset that fits into an L2 cache of 1 MiB. Our solver derived an optimal \pgmindex matching that space bound by setting $\varepsilon=393$ and taking 10 iterations and a total of 19 seconds. This result was obtained by approximating $s_L(\varepsilon)$ with the power-law $46032135 \cdot \varepsilon^{-1.16}$ which guaranteed a mean squared error of no more than 4.8\% over the range $\varepsilon\in[8,1024]$.

As another example, suppose that a database administrator wants the most efficient \pgmindex for the \emph{Longitude} dataset that fits into an L1 cache of 32 KiB. Our solver derived an optimal \pgmindex matching that space bound by setting $\varepsilon=1050$ and taking 14 iterations and a total of 9 seconds.

\myparagraph{Experiments with the min-space mode} Suppose that a database administrator wants the \pgmindex for the IoT dataset with the lowest space footprint that answers any query in less than 500 ns. Our solver derived an optimal \pgmindex matching that time bound by setting  $\varepsilon=432$, occupying 74.55 KiB of space, and taking 9 iterations and a total of 6 seconds.

As another example, suppose that a database administrator wants the most compressed \pgmindex for the \emph{Web logs} dataset that answers any query in less than 800 ns. Our solver derived an optimal \pgmindex matching that time bound by setting  $\varepsilon=1217$, occupying 280.05 KiB of space, and taking 8 iterations and a total of 17 seconds.

\myparagraph{Discussion} In contrast to the \fitingtree and the \gls{rmi}, the Multicriteria \pgmindex can trade efficiently query time with space occupancy, making it a promising approach for applications with rapidly-changing data distributions and space/time constraints. Overall, in both modes our approach ran in less than 20 seconds.

\section{Conclusions and future work}
\label{sec:conclusions}

We have introduced the \pgmindex, a learned data structure for the fully indexable dictionary problem which improves the query performance and the space occupancy of both traditional and modern learned indexes up to several orders of magnitude. We have also designed three variants of the \pgmindex: one that improves its already succinct space footprint using ad-hoc compression techniques, one that adapts itself to the query distribution, and one that provides estimations of the number of occurrences of individual or range of keys. Finally, we have demonstrated that the \pgmindex is a multicriteria data structure as it can be fast optimised within some user-given constraint on the space occupancy or the query time.

A possible research direction is to experiment with the performance of insertion and deletion of keys in a \pgmindex. To this end, we mention classic techniques such as the split-merge strategy in \btree nodes~\cite{Bender:2005,Vitter:2001}, and the use of buffers that once full are merged into the index (cf. \cite{Galakatos:2019}). The possibility of orchestrating segments, nonlinear models and rank/select indexing techniques from the compression domain~\cite{Navarro:2016,Witten:1999} is another intriguing research direction, especially within our Multicriteria framework. 

\section{Acknowledgments}

Part of this work has been supported by the Italian MIUR PRIN project ``Multicriteria Data Structures and Algorithms: from compressed to learned indexes, and beyond'' (Prot. 2017WR7SHH) and by Regione Toscana (under POR FSE 2014/2020).

\bibliographystyle{abbrv}
\bibliography{bibliography}

\end{document}